\def\MINE{1}
\newtheorem{theorem}{Theorem}[section]
\newtheorem{definition}[theorem]{Definition}
\newtheorem{lemma}[theorem]{Lemma}
\newtheorem{corollary}[theorem]{Corollary}
\theoremstyle{definition}
\newtheorem{remark}[theorem]{Remark}
\newtheorem{example}[theorem]{Example}
\newcommand\pnsi{\par\indent}
\newcommand\pnsn{\par\noindent}
\newcommand\pssi{\par\smallskip\indent}
\newcommand\pssn{\par\smallskip\noindent}
\newcommand\pmsn{\par\medskip\noindent}
\newcommand\pbsn{\par\bigskip\noindent}
\newlength{\algoindent}   
\newcommand\emdef[1]{\textsf{#1}}
\newcommand\emshort[1]{\emph{#1}}
\newcommand\emlong[1]{\emph{#1}}
\newcommand\red[1]{{\color{red} #1}}
\newcommand\redbf[1]{{\color{red} \textbf{#1}}}
\newcommand{\annotate}[1]{\if\MINE1\red{$!!!\bm\Rightarrow$} #1 \redbf{$\bm\Leftarrow!!!$}\fi}
\newcommand{\true}{\ensuremath{\mathsf{True}}\xspace}
\newcommand{\false}{\ensuremath{\mathsf{False}}\xspace}
\newcommand\tvt{\ensuremath{\mathsf{T}}\xspace}  
\newcommand\tvf{\ensuremath{\mathsf{F}}\xspace}  
\newcommand{\N}{\ensuremath{\mathbb{N}}\xspace}
\newcommand{\Ns}[1]{\ensuremath{\N_{#1}}\xspace}
\newcommand\none{\ensuremath{\bot}\xspace}
\newcommand{\prob}[1]{\mathrm{P}[#1]}
\newcommand{\al}[0]{\ensuremath{\Sigma}\xspace}   
\newcommand\als{\ensuremath{\al}\xspace}        
\newcommand{\ew}[0]{\ensuremath{\bm\varepsilon}\xspace}   
\newcommand\len[1]{|#1|}       
\newcommand\sz[0]{\mathsf{sz}}        
\newcommand\szinv[0]{\mathsf{sz}^{-1}}  
\newcommand\opfont{\mathcal}    
\newcommand{\langp}[1]{{\opfont L}(#1)}
\newcommand\lang{{\opfont L}}
\newcommand\aut{\ensuremath{\bm{a}}\xspace}   
\newcommand\autn{\ensuremath{\bm{n}}\xspace}   
\newcommand\autm{\ensuremath{\bm{m}}\xspace}   
\newcommand\dom{\ensuremath{\mathsf{dom}}\xspace}      
\newcommand{\err}[0]{\ensuremath{\varepsilon}\xspace}
\newcommand{\param}{\ensuremath{p}\xspace}  
\newcommand{\sspec}{\ensuremath{\alpha}\xspace}  
\newcommand{\cnt}{\ensuremath{\mathtt{Cnt}}\xspace}
\newcommand{\select}[0]{\stackrel{\$}{\longleftarrow}}
\newcommand{\NFA}[0]{\ensuremath{\mathsf{NFA}}\xspace}
\newcommand{\BNFA}[0]{\ensuremath{\mathsf{BNFA}}\xspace}
\newcommand{\dd}[0]{\ensuremath{D}\xspace} 
\newcommand{\dt}[0]{\ensuremath{T}\xspace} 
\newcommand{\ubfam}[0]{\ensuremath{\mathsf{B}}\xspace} 
\newcommand{\lbdu}[1]{\ensuremath{\langle#1\rangle}\xspace} 
\newcommand{\diri}[0]{\ensuremath{\mathsf{D}_{t,d}}\xspace}  
\newcommand{\ev}[1]{\ensuremath{\mathcal{E}(#1)}\xspace} 
\newcommand{\cond}[0]{\ensuremath{C}\xspace} 
\newcommand{\probd}[1]{\mathsf{prob}_{#1}} 
\newcommand\selectfin{\mathsf{selectFin}}
\newcommand\selectany{\mathsf{select}}
\newcommand\sample{\mathsf{SelectFinFam}}
\newcommand\szselect{\mathsf{sizeSelect}}
\newcommand\tosscoin{{\mathsf{tossCoin}}\xspace}
\newcommand\parestim{{\mathsf{ParamEstimate}}\xspace}
\newcommand\emptiness{{\mathsf{Emptiness}}\xspace}
\newcommand\universality{{\mathsf{Universality}}\xspace}
\newcommand\finemptiness{{\mathsf{FinEmptiness}}\xspace}
\newcommand\finuniversality{{\mathsf{FinUniversality}}\xspace}
\newcommand\maxlen{{\mathsf{maxLen}}\xspace}
\newcommand\distrparameter{{\mathsf{distrParameter}}\xspace}
\newcommand\classfont{\mathbf}
\newcommand\classP{\ensuremath{\classfont{P}}\xspace}
\newcommand\classcoNP{\ensuremath{\classfont{coNP}}\xspace}
\newcommand\PSPACE{\ensuremath{\classfont{PSPACE}}\xspace}
\newcommand\NL{\ensuremath{\classfont{NL}}\xspace}
\title{Improved Randomized Approximation of Hard Universality and Emptiness Problems\thanks{Research supported by NSERC, Canada (Discovery Grant of S.K.)}}
\author{
Pantelis Andreou\footnote{Department of Community Health and Epidemiology, Dalhousie University, Halifax, NS, Canada, \texttt{Pantelis.Andreou@dal.ca}}
\and
Stavros Konstantinidis\footnote{Department of Mathematics and Computing Science, Saint Mary's University, Halifax, NS, Canada, \texttt{s.konstantinidis@smu.ca}}
\and 
Taylor J.\ Smith\footnote{Department of Computer Science, St.\ Francis Xavier University, Antigonish, NS, Canada, \texttt{tjsmith@stfx.ca}}
}
\begin{document}

\maketitle

\begin{abstract}
We build on recent research on polynomial randomized approximation (PRAX) algorithms for the hard problems of NFA universality and NFA equivalence. 
Loosely speaking, PRAX algorithms use sampling of infinite domains within any desired accuracy $\delta$. 
In the spirit of experimental mathematics, we extend the concept of PRAX algorithms to be applicable to the emptiness and universality  problems in any domain whose instances  admit a tractable distribution as defined in this paper.
A  technical result here is that a linear  (w.r.t. $1/\delta$) number of samples  is sufficient, as opposed to the quadratic number of samples in previous papers. 
We show how the improved and generalized PRAX algorithms apply to universality and emptiness problems in various domains: ordinary automata, tautology testing of propositions, 2D automata, and to solution sets of certain Diophantine equations.
\end{abstract}

\section{Introduction}\label{sec:intro}
Polynomial randomized approximation (\emdef{PRAX}) algorithms were introduced in \cite{KoMaMoRo:2023} for deciding approximate versions of  NFA universality problems, as well as in \cite{KoMoReSe:2024} for deciding approximate versions of the NFA (in)equivalence problem.
The general idea, inspired from \cite[pg.~72]{MiUp:2017}, is to view estimating the fullness, or emptiness, of an NFA as the problem of estimating the parameter of some population and then following the tools of \cite{MiUp:2017} for parameter estimation problems. 
This idea can be applied to various types of objects (not only NFAs), where \emph{each object of interest describes a subset $L$ of some domain $X$}, and we wish to know whether $L$ is \err-close to being full, or \err-close to being empty, in $X$ for some desired tolerance $\err\in(0,1)$.
We are interested in particular in infinite domains $X$, where the subsets $L$  themselves can be infinite as well. 
Other than NFAs accepting  languages, we can talk about objects like expressions of  propositional logic where the domain is the set of truth assignments and the subset of interest is the set of satisfying truth assignments, or about Diophantine equations on three variables where the domain is the set of triples of integers and the subset of interest is the set of those triples satisfying the equation. 
As these problems can be hard (in the context of complexity theory or even of mathematics), we use sampling on the domain $X$ to obtain  information about the subset $L$, where the sampled information can be closer to the true answer as  \err gets smaller.
As there is no uniform distribution on countably infinite sets, we use sampling according to the Dirichlet distribution defined in \cite{Gol:1970,Gol:1992}.
\pnsi
In the spirit of experimental mathematics \cite{BorwDevl:2008}, we are interested in efficient versions of polynomial algorithms and their implementations that would allow us to obtain information about approximate solutions to instances of hard problems---in the same philosophy that \cite{CalDum:2020} attempts to solve instances of the halting~problem.

\pssn
\textbf{Structure of the paper and main results.}
The next section contains very basic notions and notation about formal languages, automata, hard problems in various domains, and probability distributions. We also define truncated distributions: finite versions of countably infinite distributions, where an infinite part of low probability is omitted. 
\underline{In Section}~\ref{sec:select}, we make precise how to algorithmically select (sample) elements from finite and infinite distributions using ideas from \cite{ArBa:2009,KoMaMoRo:2023}. For finite distribution families we use the concept of \emdef{polynomially samplable} distribution, as suggested in \cite{ArBa:2009}. For infinite distributions, we define \emdef{tractable} distributions \dt, where there is a polynomial algorithm $\selectany_{\dt}(\delta)$ that figures out how to cut an infinite ``tail'' of events such that the remaining finite set of events can be sampled as close to the infinite distribution as desired (i.e., w.r.t. a given tolerance $\delta$). 
We also define \emdef{locally tractable} distributions, which are applicable to a wide class of domains, and for which the main steps of the algorithm $\selectany_{\dt}(\delta)$ become concrete (Theorem~\ref{th:locally:is:tractable}).
\underline{In Section}~\ref{sec:bound}, we apply a certain upper bound of \cite{HeZhaZha:2010} for probabilities of the form $\prob{X\ge a}$ to show (Theorem~\ref{th:newbound}) that the number of samples required for emptiness and universality problems is linear with respect to the tolerance $\delta$, as opposed to the quadratic number in the general case \cite{MiUp:2017,KoMaMoRo:2023}. 
\underline{In Section}~\ref{sec:prax}, we define what we mean by  polynomial randomized approximation (\emdef{PRAX}) algorithms for any emptiness and any universality problems whose instances \sspec describe subsets $\lang(\sspec)$ of tractable domains, and we show that these PRAX algorithms do exist (Theorem~\ref{th:general:prax}). We do the same for the case where the domains involved are finite and polynomially samplable (Corollary~\ref{cor:finite:prax}). 
\underline{In Section}~\ref{sec:concrete}, we apply the general PRAX algorithms of Theorem~\ref{th:general:prax} and Corollary~\ref{cor:finite:prax} to concrete domains where we can also give time complexity estimates: tautology testing, universality of 2D automata, and emptiness of 3-variable Diophantine equations. Finally, \underline{Section}~\ref{sec:last} contains a few concluding remarks.

\section{Basic Notions and Notation}\label{sec:notation}
We use the notation \N for the set of  positive integers,  \Ns0 for the nonnegative integers, and $\N_0^{>x}$ (resp., $\N_0^{\le x}$) for the nonnegative integers greater than $x$ (resp., less than or equal to $x$), where $x$ is any real number.
If $S$ is a set, then $|S|$ denotes the cardinality of $S$.

We assume the reader to be familiar with basics of formal languages \cite{FLhandbookI,HoMoUl:2001}. 
Our arbitrary alphabet variable will be $\als$ of some cardinality $s\in\N$: $s=|\als|$. 
The \emdef{Boolean alphabet}\, $\{\tvt,\tvf\}$ consists of the two truth values $\tvt$ and $\tvf$.
Using an alphabet we can form words (strings) of any length. We have the following notation:
\pssi
\ew = empty word, \qquad\qquad\qquad $\als^*$ =  all words, \qquad $|w|$ = length of word $w$,
\pnsi
$\als^\ell$ = all words of length $\ell$,\qquad $\als^{\le\ell}$ = all words of length at most $\ell$,
\pmsn
and also similar notation like $\als^{<\ell}$ and $\als^{\ge\ell}$. We also use the following notation.
\pssi
$\NFA$ = all  NFAs (nondeterministic finite automata).
\pssi
$\BNFA$ = all block NFAs = NFAs accepting languages of a fixed word length.
\pssi
$|\aut|$ = size of  NFA \aut =  number of states plus  number of transitions in \aut.
\pssi$\langp{\aut}$ = the language accepted by the NFA 
\aut.
\pmsn

A familiarity with two-dimensional (2D) formal language and automata theory is also beneficial, and so here we review a number of fundamental concepts~\cite{GiammarresiRestivo19972DLanguages,Inoue19912DAutomataSurvey,Smith2019TwoDimensionalAutomata,Smi:2021}. 
An $m \times n$ \emdef{2D word} over an alphabet $\Sigma$ is an $m$-row-by-$n$-column array of symbols drawn from $\Sigma$. 
A \emdef{2D language} is naturally a set of 2D words. 
We have the following notation: 
\pssi
$\Sigma^{**}$ = all 2D words, \quad $\Sigma^{m \times n}$ = all 2D words of dimension $m \times n$,
\pnsi
$|z|_{\text{R}}$ = number of rows of a 2D word $z$, \quad $|z|_{\text{C}}$ = number of columns of a 2D word $z$.
\pmsn

A \emdef{2D automaton} is a finite automaton that takes as input 2D words. 
Just as a two-way finite automaton operating on strings can move its input head left and right, a 2D automaton operating on 2D words can move its input head in four directions: up, down, left, and right. 
The input word to a 2D automaton is surrounded by a special marker symbol \# that prevents the input head of the automaton from moving outside of the boundaries of the input word.

\pssn
\textbf{Subset descriptions.}
In the fields of algorithms and complexity, a subset of some domain (set) $X$ is described using a finite expression $\sspec$. We shall write \emlong{$\langp{\sspec}$ to denote the subset of $X$ described by $\sspec$}. 
Further below, we shall talk about the probability $\dd\big(\langp{\sspec}\big)$ of the subset $\langp{\sspec}$, relative to a distribution \dd.
\pssi\qquad
We shall  use $D(\sspec)$ as a  shorthand notation for $D\big(\langp{\sspec}\big)$.
\pmsn
{\small(In general, the domain $X$ depends on \sspec, i.e. $X=X_{\sspec}$, but we omit references to \sspec when there is no risk of confusion.)}

\begin{example}\label{ex:subset:descr} \textsf{[Subset Descriptions]}
Here are a few examples of subset descriptions \sspec with some associated hard problems about \sspec.
\begin{itemize}
    \setlength{\itemsep}{0pt}%
    \setlength{\parskip}{0pt}%
	\item $\sspec$ is a regular expression, or an NFA, over some alphabet $\als$ and $\langp{\sspec}$ is the language (subset of $\als^*$) described by $\sspec$. 
	A regular expression, or NFA, \sspec is \emdef{universal} if $\lang(\sspec)=\als^*$. Deciding whether $\lang(\sspec)=\als^*$ is a \PSPACE-hard problem. If the NFA \sspec is a block NFA of some word length $\ell$ then the problem of whether \sspec is block-universal ($\lang(\sspec)=\als^\ell$) is \classcoNP-hard \cite{KMR:2018}.
	\item $\sspec=\sspec(v_1,\ldots,v_k)$ is a proposition in conjunctive normal form (CNF) involving the truth variables $v_i$ and $\langp{\sspec}$ is the set of truth assignments $f=t_1\cdots t_k$, with each $t_i\in\{\tvt,\tvf\}$, that satisfy $\sspec$: $f(\sspec)=\tvt$. 
	For example the CNF proposition $(v_1\lor \bar v_2)\land(v_2\lor v_3)$ is satisfiable via the truth assignment $f=\tvt\tvt\tvf$, so $f\in\langp{\sspec}$. 
	A CNF proposition $\sspec$ is a \emdef{tautology} if every truth assignment  satisfies $\sspec$. Deciding whether \sspec is a tautology is a \classcoNP-hard problem \cite[pg.~56]{ArBa:2009}. 
	\item \sspec is a 2D automaton over some alphabet $\als$ and $\langp{\sspec}$ is the 2D language (i.e., a subset of $\als^{**}$) described by $\sspec$. The universality problem (whether $\lang(\sspec)=\als^{**}$) and the emptiness problem (whether $\lang(\sspec)=\emptyset$) are undecidable for 2D automata, but for restricted forms of 2D automata they can be \PSPACE-hard \cite{Smith2019TwoDimensionalAutomata,Smi:2021}.
	\item $\sspec=\sspec(x_1,\ldots,x_k)$ is a Diophantine polynomial expression and $\langp{\sspec}$ is the set of integer tuples $(n_1,\ldots,n_k)$ such that $\sspec(n_1,\ldots,n_k)=0$. 
	For example, the Diophantine polynomial expression $\sspec=x^2+y^2-z^2$ describes the set $\langp{\sspec}$ of all Pythagorean triples $(n_1,n_2,n_3)$ such that $n_1^2+n_2^2=n_3^2$. 
	On the other hand, $\langp{\sspec_k}=\emptyset$ when $\sspec_k=x^k+y^k-z^k$, for any fixed integer $k\ge3$.
	The general question of whether a given Diophantine equation has a solution is undecidable. However, even for specific simple looking Diophantine equations there is no information as to whether their set of  solutions is empty \cite{Grechuk:2022}.
\end{itemize}
\end{example}

\pssn\textbf{Probability distributions.}
We give here a quick presentation on probability distributions in the spirit of \cite{Gol:1970,Gol:1992,KoMaMoRo:2023}.
Our presentation improves the concept of ``augmented distribution'' in \cite{KoMaMoRo:2023}, which is intended to make smaller finite versions of (usually infinite) length distributions by cutting a set of events with low probability. 
Here we use  the term \emdef{truncated distribution} which is more general and can be used to make smaller finite versions of any distribution, not necessarily a length distribution.

\pnsi
Let $X$ be a countable set. A \emdef{probability distribution} on $X$ is a function $\dd:X\to[0,1]$ such that $\sum_{x\in X}\dd(x)=1$.
The \emdef{domain} of \dd, denoted by $\dom\dd$, is the subset $\{x\in X:\dd(x)>0\}$ of $X$. 
If $X$ is \emshort{finite}, that is, $X=\{x_1,\ldots,x_k\}$ for some $k\in\N$, then we write 
$$D=\big(D(x_1),\ldots,D(x_k)\big).$$
If $X\subseteq\N_0$ then the distribution $D$ is called a \emdef{length distribution}, and if $X\subseteq\N_0\times\N_0$ then the distribution $D$ is called a 2D length distribution---see below the Dirichlet length distributions  in Example~\ref{ex:dirichlet}.
We use the following terminology from \cite{Gol:1992}.

\begin{definition}\label{def:prob}
    Let $\dd$ be a probability distribution on $X$. For any subset $L$ of $X$, we define the quantity
\begin{equation}\label{eq:uindex}\dd(L)=\sum_{x\in L}\dd(x)\end{equation}
and refer to it as \emdef{the probability that a randomly selected element from \dd is in $L$.} The following notation, borrowed from cryptography, means that $x$ \emdef{is randomly selected from} $\dd$:\quad
$
x\select\dd.
$
\end{definition}

\begin{example}\label{ex:dirichlet}
\textsf{[Dirichlet distributions]} 
For any $t>1$ and $d\in\N_0$, the Dirichlet distribution $\diri$ on $\N_0$ is defined such that $\diri(\ell)=(1/\zeta(t))(\ell+1-d)^{-t}$ for $\ell\in\N_0^{\ge d}$, \cite{Gol:1970,KoMaMoRo:2023}, where $\zeta$ is the Riemann zeta function. 
This is a length distribution. The original version in \cite{Gol:1970} uses no parameter $d$ (i.e., $d=0$); however, as in \cite{KoMaMoRo:2023}, here we allow $d>0$ when it is desirable to skip small lengths (e.g., $\ell=0$) that would otherwise get a high probability. 
In \cite{Gol:1970}, the  author considers the Dirichlet distribution to be the basis  where  \emlong{``many heuristic probability arguments based on the fictitious uniform distribution on the positive integers become rigorous statements.''}
For some applications, we need 2D versions of $\diri$: \[
\diri^2(k,\ell)=\diri(k)\diri(\ell),
\]
which can be used to select two independent lengths. This is a 2D length distribution.
\end{example}

\pssn 
\textbf{Truncated distributions.}
Selecting from a distribution $D$ with a large domain $X=\dom D$ could return an element of large size, which can be intractable with respect to the algorithmic size of the input. 
For example, selecting from $D=\diri$ could return an arbitrarily large length, albeit with low probability.
For this reason, given a desirable ``small'' positive $\delta<1$, we can choose a finite subset $F$ of $X$ and then we select elements from $F$, omitting the (usually infinite) \emdef{tail} $X-F$ of $D$, provided that $D(X-F)\le\delta$. 
More specifically, if $F=\{x_1,\ldots,x_k\}$, the \emdef{$F$-truncated version of $D$} is 
the distribution $D^F$ with domain $F\cup\{\none\}$, where `\none' is an object outside of $X$, such that
\[
D^F = \big(D(x_1),\ldots,D(x_k),\> 1-D(F)\big)
\]
Each element $x\in X$ has an algorithmic size 
and we normally choose $F$ such that the sizes of its elements are not too large w.r.t. the input size of the algorithm in which the selection takes place. 
Thus, the distribution $D^F$ can select the outcome `\none' instead of a large $X$-element.

\begin{example}\label{ex:truncated}
	The main example of a truncated length distribution $D$ in \cite{KoMaMoRo:2023} is the distribution $D^F$ where $F=\N_0^{\le M}$, for some desirable maximum length $M$.
	This distribution truncates all lengths $>M$.
	In fact, in \cite{KoMaMoRo:2023}, $D^F$ is denoted by $D^M$ and is called an ``augmented distribution''.
\end{example}

\begin{example}\label{ex:word:distr} 
\textsf{[Word distributions]}
The distribution $\diri$ can be used as a first step to select a word in $\als^*$: first, select a length $\ell$ with probability $\diri(\ell)$, and then select uniformly from \als each of the $\ell$ symbols of the word. 
This process defines the \emdef{Dirichlet word distribution} $\lbdu{\diri}$ on $\als^*$ such that $$\lbdu{\diri}(w) = \diri(\len w)\cdot s^{-\len w}\quad \text{for all } w\in\als^*$$
where $s=|\als|$.
We have that $\lbdu{\diri}(\als^\ell)=\diri(\ell)$ for any word length $\ell$.
Similarly, the distribution $\diri^2$ can be used as a first step to select 2D words:  first, select the number of rows $k$ and the number of columns $\ell$, and then select uniformly from \als each of the $k\times\ell$ alphabet symbols of the 2D word. 
This process defines the \emdef{2D Dirichlet word distribution} $\lbdu{\diri^2}$ on $\als^{**}$ such that 
$$\lbdu{\diri^2}(z) = \diri^2(|z|_{\mathrm R},|z|_{\mathrm C})\cdot s^{-(|z|_{\mathrm R}+|z|_{\mathrm C})}\quad \text{for all } z\in\als^{**}.
$$
We have that $\lbdu{\diri^2}(\als^{k\times\ell})=\diri^2(k,\ell)$ for any word dimensions $k,\ell$.
\end{example}

\emph{\textbf{Note:} Following \cite[pg.~126]{ArBa:2009} and \cite{KoMaMoRo:2023}, we shall assume unit cost for arithmetic operations.}
\pssi
We shall use a few times the following version of a result from \cite[Lemma~6]{KoMaMoRo:2023}.

\begin{lemma}\label{lem:dirichlet}
Let $\delta\in(0,1)$ and $M\in\N$.
If $M\ge\sqrt[t-1]{1/\delta}+(d-1)$ then $\diri(\N_0^{>M})\le\delta$.
\end{lemma}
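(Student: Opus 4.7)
The plan is to estimate the tail sum $\diri(\N_0^{>M})$ directly using a standard integral comparison, and then apply the hypothesis.

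First, I would expand the definition of $\diri$ and re-index. Since $\diri(\ell) = (1/\zeta(t))(\ell+1-d)^{-t}$ for $\ell\ge d$, setting $m = \ell+1-d$ gives
\[
\diri(\N_0^{>M}) \;=\; \frac{1}{\zeta(t)}\sum_{\ell=M+1}^{\infty}(\ell+1-d)^{-t} \;=\; \frac{1}{\zeta(t)}\sum_{m=M+2-d}^{\infty} m^{-t}.
\]
(The case $M+1 < d$ is trivial, because then $\diri(\N_0^{>M}) = 1$ is not bounded by $\delta$; however the hypothesis $M\ge(1/\delta)^{1/(t-1)}+(d-1)$ together with $\delta<1$ guarantees $M+1-d\ge 1$, so the re-indexing is valid.)

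Next, I would apply the integral test to the tail: since $x\mapsto x^{-t}$ is decreasing on $(0,\infty)$,
\[
\sum_{m=N+1}^{\infty} m^{-t} \;\le\; \int_{N}^{\infty} x^{-t}\,dx \;=\; \frac{N^{1-t}}{t-1},
\]
which, applied with $N=M+1-d$, bounds our tail by $(M+1-d)^{1-t}/(t-1)$. Combining with the prefactor gives
\[
\diri(\N_0^{>M}) \;\le\; \frac{(M+1-d)^{1-t}}{\zeta(t)\,(t-1)}.
\]

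The key auxiliary observation is that $\zeta(t)(t-1)\ge 1$ for all $t>1$. This follows immediately from the same integral comparison applied in the opposite direction, $\zeta(t)=\sum_{n\ge 1}n^{-t}\ge\int_{1}^{\infty}x^{-t}\,dx = 1/(t-1)$. Hence $\diri(\N_0^{>M}) \le (M+1-d)^{1-t}$.

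Finally, the hypothesis rearranges to $M+1-d \ge (1/\delta)^{1/(t-1)}$, so raising to the $(t-1)$-th power and inverting yields $(M+1-d)^{1-t}\le\delta$, completing the bound. I do not expect any serious obstacle; the only delicate point is to confirm that the constant $\zeta(t)(t-1)$ is at least $1$ so that it can be dropped, which is what allows the clean $(1/\delta)^{1/(t-1)}$ form of the threshold (rather than a weaker threshold containing $\zeta(t)$ and $t-1$).
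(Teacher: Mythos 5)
Your proof is correct. The paper does not actually prove this lemma itself---it cites it as a version of Lemma~6 of the earlier work \cite{KoMaMoRo:2023}---but your argument (re-index the tail to $\sum_{m\ge M+2-d} m^{-t}$, bound it by $\int_{M+1-d}^{\infty}x^{-t}\,dx=(M+1-d)^{1-t}/(t-1)$, and absorb the constant via $\zeta(t)\ge 1/(t-1)$, so that $\diri(\N_0^{>M})\le(M+1-d)^{1-t}\le\delta$ under the stated hypothesis) is the standard derivation and is sound, including the check that the hypothesis forces $M+1-d\ge1$ so the re-indexing and integral comparison are valid.
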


\section{How to Randomly Select from a Distribution}\label{sec:select}
Returning to the topic of subset descriptions \sspec, we wish to know whether $\sspec$ satisfies a certain property; for example, whether \sspec is universal: $\lang(\sspec)=X$. 
As these questions can be hard, we wish to get information about the problem instance \sspec by sampling elements of $X$.
So we require that $X$ = the domain of some distribution \dd  for which there is a polynomial (randomized) algorithm 
that returns  a randomly selected element from \dd. 
\pnsi
Following \cite{KoMaMoRo:2023}, for finite distributions $\dd=\big(\dd(x_1),\ldots,\dd(x_k)\big)$ on some set $\{x_1,\ldots,x_k\}$, we shall assume available the  (randomized) algorithm 	$\selectfin(\dd)$, which  returns a randomly selected $x_i$ with probability $\dd(x_i)$. 
The algorithm works in time $O(k)$ using the reasonable assumption of constant cost of $\tosscoin(p)$ and of arithmetic operations \cite[pg.~126, 134]{ArBa:2009}. 
Here, $\tosscoin(p)$ returns 0 or 1 with probability $p$ or $1-p$, respectively, where $p\in[0,1]$.
\pnsi
For a finite distribution $\dd$ that has an implicit description, it might not be efficient to list the value  $\dd(x)$ for every $x\in\dom\dd$.
 For example, if \sspec is a CNF proposition with some $k$ variables, then the sampling domain  $\{\tvt,\tvf\}^k$ = ``all truth assignments of length $k$'' is finite but exponentially large w.r.t. the size of \sspec. Despite this, we can sample in time $O(k)$ a truth assignment of length $k$ by simply invoking $k$ times the function $\tosscoin(1/2)$. 
 In this example, the distribution is implicitly described by the parameter $k$ that can be computed from \sspec. So in fact, following the approach of \cite[pg.~365]{ArBa:2009}, we consider a \emph{family} $\dd=(\dd_k)$ of finite distributions (which is associated to the instances \sspec of a desirable decision problem).
 
\begin{definition}\label{def:samplable}
A family $\dd=(\dd_k)$ of finite distributions is called \emdef{polynomially samplable} if there is a (randomized) algorithm $\sample(k)$ that selects an element from $\dd_k$ in polynomial time with respect to $k$, where $k$ is given in unary.
\end{definition}
\pnsi
Next we develop a method of selecting elements from a distribution \dt that has an \emph{infinite domain} $X=\dom\dt$. 
In reality, our method selects elements from a truncated version of \dt.
In the literature (e.g., \cite{BCGL:1992}), one does find the concept of polynomially samplable distribution even for distributions with infinite domain, but these relate to cryptographic models where arithmetic operations have non-constant complexities. 
Here, however, we aim at efficient (both theoretically and practically) sampling methods.
\pnsi
The simplest definition is the following.

\begin{definition}\label{def:tractable}
A distribution \dt with domain $X$ is called \emdef{tractable} 
if there is a  (randomized) algorithm $\selectany_{\dt}(\delta)$, where $\delta\in(0,1)$, that is polynomial w.r.t. $1/\delta$ and randomly selects an element from the truncated distribution $\dt^F$, where $F$ is a set of $X$-elements determined by the algorithm once (that is, $F$ is specified by a static variable) such that $\dt(X-F)\le\delta$.
\end{definition}

The above definition is meant for distributions with infinite domain. The algorithm $\selectany_{\dt}(\delta)$ returns either an element in $F\subseteq X$, or `\none' (indicating an attempt to select an improbable element). 
This definition, however, is too abstract and does not reveal any of the mechanics of the algorithm  $\selectany_{\dt}(\delta)$.  
Next we work toward a more concrete definition.
\pnsi
\emph{We assume that the domain $X$ of the distribution has a \emdef{size function} $\sz: X\to \N_0$ such that, for each $m\in\N_0$, the set $\szinv(m)=\{x\in X:\sz(x)=m\}$ is finite.}
\pssn 
\textbf{Conditional distributions and two-level sampling.}
Let $Y$ be a subset of the domain $X=\dom \dd$ of some distribution \dd. 
The conditional version of \dd on $Y$ is the distribution $(\dd|Y)$ with domain $Y$ such that $(\dd|Y)(y)=\dd(y)/\dd(Y)$.
The concept of conditional distribution can be helpful when we want to select from a distribution \dd with large domain:
first, select a subset $Y$ of $X$, and then select an element from $(\dd|Y)$. This two-level sampling returns $y\in Y$ with probability $\dd(Y)\cdot (\dd|Y)(y)=\dd(y)$.
We did this two-level sampling in Example~\ref{ex:word:distr} where we selected a word from $\lbdu{\diri}$ by first selecting a length $\ell$ (that is, the subset $\als^\ell$ of $\als^*$) and then selecting a word of length $\ell$.



\begin{definition}\label{def:locally:tractable}
	A  distribution \dt  whose domain $X=\dom\dt$ has a size function $\sz$ 
	is called \emdef{locally tractable} if the following conditions hold true.
\begin{enumerate}
    \setlength{\itemsep}{0pt}%
    \setlength{\parskip}{0pt}%
  \item There is a  randomized algorithm $\szselect_{\dt}(m)$, where $m\in\N_0$, that selects an element from $\big(\dt|\szinv(m)\big)$ and works in  polynomial time w.r.t. $m$. Thus, $\szselect_{\dt}(m)$ selects an $X$-element of size $m$ from \dt.
  \item There is an algorithm $\probd{\dt}(m)$, where $m\in\N_0$, that returns the value $\dt\big(\szinv(m)\big)$ and works in polynomial time w.r.t. $m$. Thus, $\probd{\dt}(m)$ returns the probability of the set of $X$-elements of size $m$.
  \item For all $\delta\in(0,1)$, there is $M\in\N_0$ such that $\dt\big(\szinv(\N_0^{>M})\big)\le\delta$, $M$ is of polynomially bounded magnitude  w.r.t. $1/\delta$ (that is, 
  $M=O\big((1/\delta)^k\big)$ for some $k\in\N_0$), and there is an algorithm $\maxlen_\dt(\delta)$ that returns such an $M$ and works in polynomial time w.r.t. $1/\delta$. 
  Thus, $M$ is intended to define the truncated version $\dt^F$ of $\dt$, where $F=\{x\in X:\sz(x)\le M\}$.
\end{enumerate}
\end{definition}

\pnsn
Definition~\ref{def:locally:tractable}  gives a broad class of distributions for which sampling can be done in polynomial time and is applicable to many problem domains.
Theorem~\ref{th:locally:is:tractable} below shows how the definition can be used to write the desired algorithm $\selectany_{\dt}(\delta)$ in Definition~\ref{def:tractable} that selects an element from a truncated version of \dt.

\begin{example}\label{ex:locally:tractable}
We consider the domain $\al_2^*$  of binary words and the Dirichlet word distribution $T=\lbdu{\diri}$ of Example~\ref{ex:word:distr}. 
We use the size function $\sz(w)$ = the length of the word $w$, so $\szinv(m)=\al_2^m$.
Now, we explain that $T=\lbdu{\diri}$ is locally tractable:
(i) The required algorithm $\szselect_{\dt}(m)$ uniformly selects a binary word of length $m$ by using $m$ times the function call $\tosscoin(1/2)$; and this works in time $O(m)$.
(ii) The required algorithm $\probd{T}(m)$ simply computes  $T(\al_2^m)=\diri(m)=1/\big(\zeta_t\cdot (1+m)^t\big)$,  which can be done in constant time, assuming unit cost for arithmetic operations \cite[pg.~126]{ArBa:2009}. 
(iii) By Lemma~\ref{lem:dirichlet}, if $M\ge \sqrt[t-1]{1/\delta}+d-1$, then $\diri(\N_0^{>M})\le\delta$, which is equivalent to $T(\al_2^{>M})\le\delta$; hence, $M$ can be computed in constant time, assuming again unit cost for arithmetic operations.
\end{example}

\begin{theorem}\label{th:locally:is:tractable}
Every locally tractable  distribution is tractable.
\end{theorem}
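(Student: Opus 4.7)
The plan is to assemble an explicit $\selectany_{\dt}(\delta)$ algorithm out of the three subroutines $\maxlen_\dt$, $\probd{\dt}$, $\szselect_\dt$ guaranteed by local tractability, using a two-level sampling strategy (first pick a size, then pick an element of that size) exactly in the spirit of the word-distribution example.

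First I would call $M := \maxlen_\dt(\delta)$ and store it in a static variable, so that across invocations of $\selectany_{\dt}(\delta)$ with the same $\delta$ the set $F := \{x\in X : \sz(x)\le M\} = \bigcup_{m=0}^M \szinv(m)$ is fixed once, as Definition~\ref{def:tractable} requires. By condition~(3) of local tractability we have $\dt(X-F)=\dt\big(\szinv(\N_0^{>M})\big)\le\delta$, and $M$ is of polynomial magnitude in $1/\delta$. Next, I would precompute (and cache) the values $p_m := \probd{\dt}(m)$ for $m=0,1,\ldots,M$ using condition~(2), and set $p_{\none} := 1-\sum_{m=0}^{M} p_m$; note that $\sum_{m=0}^M p_m = \dt(F)$, so $p_{\none}=\dt(X-F)\in[0,\delta]$, making $(p_0,\ldots,p_M,p_{\none})$ a bona fide finite distribution on $\{0,1,\ldots,M,\none\}$.

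To produce one sample, I would invoke $\selectfin$ on that $(M{+}2)$-entry distribution. If it returns $\none$, output $\none$; if it returns a size $m\in\{0,\ldots,M\}$, call $\szselect_{\dt}(m)$ and output its result, which lies in $\szinv(m)\subseteq F$. Correctness: for any $x\in F$ of size $m$, the probability the algorithm outputs $x$ is $p_m\cdot(\dt|\szinv(m))(x) = \dt\big(\szinv(m)\big)\cdot\dt(x)/\dt\big(\szinv(m)\big) = \dt(x) = \dt^F(x)$, and the probability of outputting $\none$ is $p_{\none} = \dt(X-F) = \dt^F(\none)$. Hence the output distribution is exactly $\dt^F$, as Definition~\ref{def:tractable} demands.

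For the running time, $\maxlen_\dt(\delta)$ runs in time polynomial in $1/\delta$ and produces $M=O\big((1/\delta)^k\big)$; the $M{+}1$ calls to $\probd{\dt}(m)$ each take time polynomial in $m\le M$, so the total preprocessing is polynomial in $M$, and thus polynomial in $1/\delta$. The call to $\selectfin$ on the $(M{+}2)$-entry list runs in time $O(M)$, and the final $\szselect_{\dt}(m)$ runs in time polynomial in $m\le M$; composition of polynomials with $M=O\big((1/\delta)^k\big)$ keeps us polynomial in $1/\delta$. There is no real technical obstacle here; the only thing to be mildly careful about is the ``static $F$'' clause of Definition~\ref{def:tractable}, which is handled by caching $M$ (and optionally the $p_m$'s) on the first call so that repeated calls with the same $\delta$ sample from the same $\dt^F$.
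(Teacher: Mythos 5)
Your proposal is correct and follows essentially the same route as the paper's proof: build the finite size distribution $\big(\probd{\dt}(0),\ldots,\probd{\dt}(M),1-\sum_m\probd{\dt}(m)\big)$ once using $\maxlen_{\dt}(\delta)$, select a size (or $\none$) from it, then call $\szselect_{\dt}(m)$, with the same verification that the output distribution equals $\dt^F$. Your added care about the running-time composition and the static caching of $F$ only makes explicit what the paper leaves implicit.
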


\begin{proof} 
We show the steps of the algorithm $\selectany_{\dt}(\delta)$ specified in Definition~\ref{def:tractable}.
The first step is to invoke $\maxlen_{\dt}(\delta)$ to get $M$ = the maximum size of the $X$-elements that will be selected. 
Let $F=\szinv(\N_0^{\le M})=\{x\in X: \sz(x)\le M\}$.
The second step is to use $\probd{\dt}(m)$, for $m=0,\ldots, M$, to compute the distribution 
\[
P_M = \Big(T(\szinv(0)), T(\szinv(1)),\ldots, T(\szinv(M)),\>1-\sum_{i=0}^MT(\szinv(i))\Big),
\]
The third step is to select from $P_M$ a size $m$ or `\none'. 
The last step is to either return \none or use $\szselect_{\dt}(m)$ to return an element of $F$.
Note that the first two steps are used only in the first invocation of the algorithm where $P_M$ is initialized.
For any $x\in F$, let $P(x)$ be the probability that the above algorithm returns $x$. Let $m=\sz(x)$. We confirm that $P(x)=\dt^F(x)$:
\[
P(x)= P_M(m)\cdot \big(T|\szinv(m)\big)(x) = P_M(m)\cdot \frac{T(x)}{T(\szinv(m))}=T(x)=T^F(x). \qedhere
\]
\end{proof}

\section{Improved Sample Size Bound for Parameter Estimate}\label{sec:bound}
A main technical contribution of this work is Theorem~\ref{th:newbound}, which implies that a linear sample size $n=n(\delta)$ is sufficient to use in a randomized approximation algorithm (with tolerance $\delta$) for both emptiness and universality problems, as opposed to the quadratic sample sizes in \cite{KoMaMoRo:2023} and \cite{KoMoReSe:2024}.
More specifically, if the condition $C$ in the below theorem specifies a subset $L$ of $X=\dom D$ and $p=D(L)$, then (i) if $p<1-\delta$ (i.e., $L$ is not close to being universal) then it is not probable that all $n$ sampled $X$-elements belong to $L$; (ii) if $p>\delta$ (i.e., $L$ is not close to being empty) then it is not probable that none of the $n$ sampled $X$-element belongs to $L$.
In either case, the linear sample size $n$ suffices.

\begin{theorem}\label{th:newbound}
Let \cnt be the random variable for the final value of $\mathrm{cnt}$ in the random process of Fig.~\ref{fig:param:est}. 
Let \param =  the probability that $x$ satisfies condition  $\cond$ when $x$ is selected from the distribution $D$. Let $c=4.76603$.
\begin{itemize}
	\item If $p<1-\delta$ and $n\ge c/\delta$ 
	then $\prob{\cnt=n}\le1/4$.
	\item If $p>\delta$ and $n\ge c/\delta$ then $\prob{\cnt=0}\le1/4$.
\end{itemize}
\end{theorem}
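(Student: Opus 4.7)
The first step is to unpack the process in Fig.~\ref{fig:param:est}, which I expect to draw $n$ independent samples $x_1,\ldots,x_n$ from $D$ and maintain $\mathrm{cnt}$ as the number of indices $i$ for which $x_i$ satisfies $\cond$. Writing $X_i$ for the indicator that $x_i$ satisfies $\cond$, the sum $\cnt=X_1+\cdots+X_n$ is a $\mathrm{Binomial}(n,p)$ random variable, so the two events of interest have the exact closed forms $\prob{\cnt=n}=p^n$ and $\prob{\cnt=0}=(1-p)^n$.

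For the first bullet, the hypothesis $p<1-\delta$ gives $p^n<(1-\delta)^n$. I would then invoke the analytic inequality $(1-\delta)^{1/\delta}\le e^{-1}$ (an immediate consequence of $\ln(1-x)\le -x$ on $(0,1)$), raised to the $c$-th power, to obtain
\[
(1-\delta)^{c/\delta} \;=\; \bigl((1-\delta)^{1/\delta}\bigr)^c \;\le\; e^{-c}.
\]
Substituting $n\ge c/\delta$ yields $\prob{\cnt=n}\le e^{-c}$, and a numerical check confirms $e^{-4.76603}\approx 0.0085\le 1/4$. The second bullet is perfectly symmetric: the hypothesis $p>\delta$ gives $\prob{\cnt=0}=(1-p)^n<(1-\delta)^n\le e^{-c}\le 1/4$.

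The main obstacle, and apparently the reason for the stated value $c=4.76603$ in place of the much smaller $\ln 4\approx 1.386$ that would suffice for the crude argument above, is that the theorem is meant to be proved via the upper bound of \cite{HeZhaZha:2010} on $\prob{X\ge a}$ advertised at the start of Section~\ref{sec:bound}. I would apply that bound with $X=\cnt$ and $a=n$ for the first bullet (and dually with $X=n-\cnt$, $a=n$ for the second), then solve the resulting inequality for the smallest $c$ that drives the right-hand side below $1/4$; the value $4.76603$ should emerge as the outcome of this optimization. The risk in this step is that the He--Zhang--Zhang bound may require nontrivial hypotheses (for instance on moments or on the shape of $D$) that need to be verified for the present binomial setting before the clean constant can be read off.
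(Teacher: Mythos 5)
Your direct argument is correct, complete, and takes a genuinely different---and substantially simpler---route than the paper. The paper does exactly what your last paragraph anticipates: it sets $X=\cnt-np$ (resp.\ $X=np-\cnt$) and $a=nq$ (resp.\ $a=np$), computes $M_2=npq$ and $M_4=npq+npq(3n-6)pq$, verifies the hypotheses $L<1$ and $K\le L+\frac1L-1$ of the He--Zhang--Zhang lemma (which reduces to a quadratic inequality in $n$), and then forces the resulting fraction below $1/4$ via a cubic in $n$ whose analysis produces the condition $c^3\ge 2c^2+5c+39$; that is where $c=4.76603$ comes from. But your opening observation makes all of this unnecessary for the events actually at issue: since $\cnt$ is $\mathrm{Binomial}(n,p)$, the extreme events have exact probabilities $\prob{\cnt=n}=p^n$ and $\prob{\cnt=0}=(1-p)^n$, and the chain $p^n<(1-\delta)^n\le\bigl((1-\delta)^{1/\delta}\bigr)^{c}\le e^{-c}<1/4$ (using $\ln(1-\delta)\le-\delta$ and that $(1-\delta)^x$ is decreasing in $x$) proves both bullets. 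Because $n\ge 4.76603/\delta$ implies $n\ge(\ln 4)/\delta$, your argument establishes the theorem exactly as stated and in fact shows the constant can be lowered to $\ln 4\approx 1.386$, which is strictly better than what the moment bound delivers. The fourth-moment machinery would earn its keep only if one needed to bound non-extreme tail events such as $\prob{\cnt\ge n-k}$ for $k\ge1$ (where no closed form as clean as $p^n$ is available), but for the all-or-nothing events in this theorem it is overkill. The only soft spot in your write-up is the final paragraph, where you defer to the He--Zhang--Zhang bound as though it were needed to justify the stated constant and leave that verification unexecuted; since your elementary argument already covers the claim as stated, nothing is actually missing.
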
	

\begin{figure}[t]
\begin{center}
\parbox{0.40\textwidth}
{
\hspace*{0.3\algoindent} 
$\parestim\,(n,D,C)$
\pssn \hspace*{1.5\algoindent}
cnt := 0;
\\ \hspace*{1.5\algoindent}
repeat $n$ times:   
\\\hspace*{2.5\algoindent} 
$x$ $\select$ $D$;
\\ \hspace*{2.5\algoindent} 
if ($x$ satisfies $C$) 
\\ \hspace*{3.5\algoindent}
cnt := cnt+1;
\\ \hspace*{1.5\algoindent} 
return cnt / $n$;
}
\caption{
This random process returns an estimate of\, \param =  \emph{the probability that an element $x$ selected from the distribution  $D$ satisfies condition $C$}. 
The parameter $n$ is the number of samples to select from the  distribution $D$.  
Example 1: The condition $C$ is whether a truth assignment $x$ satisfies a certain CNF proposition \sspec. Hence, if \sspec involves some $k$ variables then  $D$ = the uniform distribution on $\{\tvt,\tvf\}^k$.  This process can be used to determine whether the proposition \sspec is close to being a tautology. 
Example 2: The condition $C$ refers to some  
NFA \sspec over some alphabet \al. The distribution is $D = \lbdu{\diri^F}$ = the word distribution based on a truncated  Dirichlet distribution. 
The condition is whether ``$x=\none$ or $x\in\lang(\sspec)$''. This condition is used in \cite{KoMaMoRo:2023} with regards to how close \sspec is universal relative to~$\lbdu{\diri}$.
}\label{fig:param:est}
\end{center}
\end{figure}

\begin{remark}\label{rem:new:bound}
The sample size bound for parameter estimates used in 
	\cite{KoMaMoRo:2023,KoMoReSe:2024}  is based on the Chebyshev bound (inequality), which becomes as follows when phrased  for a binomial random variable $B$, where $a>0$:
\begin{equation}\label{eq:cheb}
	\prob{\,|B-E(B)|\ge a} \;\le\; n/(4a^2).
\end{equation}
We also note that using the  Chernoff bound, \cite[Equation~(4.6)]{MiUp:2017},
\[
\prob{\,B-np\ge\delta np}\le 2 e^{-np\delta^2/3}
\] 
with $p<1-\delta$, again leads to a quadratic value for $n$.
\end{remark}

\pnsi
For the proof of the above theorem,
we use as a lemma the following result from \cite[Case 2 of Theorem~2.3]{HeZhaZha:2010}.
\begin{lemma}\label{lem:known}
Let $a>0$, let $X$ be a random variable with $\ev{X}=0$, let
	$K=M_4/M_2^2$, and let $L=M_2/a^2$, where $M_2,M_4$ are the second and fourth moments of $X$. If $K\le L+\frac1L-1$ and $L<1$ then 
\[
\prob{X\ge a}\le 
\frac{M_4-M_2^2}{M_4-2M_2a^2+a^4}.
\]
\end{lemma}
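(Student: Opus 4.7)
The plan is to identify $\cnt$ as the number of successes in $n$ independent Bernoulli$(p)$ trials, so $\cnt\sim\mathrm{Binomial}(n,p)$, with central moments in closed form: $M_2=np(1-p)$ and $M_4=np(1-p)\bigl[1+(3n-6)p(1-p)\bigr]$. For the first claim, I would set $X:=\cnt-np$, so $\ev{X}=0$, and note that the event $\{\cnt=n\}$ is precisely $\{X=n(1-p)\}\subseteq\{X\ge a\}$ with $a:=n(1-p)$. Under the hypothesis $p<1-\delta$ and $n\ge c/\delta$, we have $a>n\delta\ge c$. The quantities governing Lemma~\ref{lem:known} become $L=M_2/a^2=p/[n(1-p)]$ and $K=M_4/M_2^2=3+\bigl(1-6p(1-p)\bigr)/[np(1-p)]$. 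The precondition $L<1$ is immediate once $a>1$, and $K\le L+1/L-1$, after clearing denominators, reduces to a quadratic inequality in the parameter $u:=n(1-p)$ whose upper branch is of the form $u\ge u_\star(\delta)$, approaching $u\ge 4$ as $\delta\to 0$.

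With the preconditions secured, Lemma~\ref{lem:known} yields $\prob{\cnt=n}\le (M_4-M_2^2)/(M_4-2M_2a^2+a^4)$. Plugging in the closed-form moments and using that $\prob{\cnt=n}=p^n$ is increasing in $p$ (so the worst case is $p=1-\delta$), the target inequality $(M_4-M_2^2)/(M_4-2M_2a^2+a^4)\le 1/4$ simplifies, after cross-multiplying, to a polynomial inequality in $u$ and $\delta$ whose critical $u$-values can be located in closed form. The constant $c=4.76603$ then arises as the supremum, over $\delta\in(0,1)$, of the maximum of the precondition threshold and the bound threshold; for $n\ge c/\delta$ we have $u\ge c$, both requirements are met, and the $1/4$ bound follows.

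The second claim is symmetric. I would apply Lemma~\ref{lem:known} to $Y:=np-\cnt$ with $a:=np>n\delta$. The central moments of $Y$ coincide with those of $X$ (central moments are invariant under sign change), while the roles of $p$ and $1-p$ are interchanged. Since $\prob{\cnt=0}=(1-p)^n$ is decreasing in $p$, the worst case is $p=\delta$, and exactly the same algebraic reduction delivers the identical tail bound under $n\ge c/\delta$.

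The main obstacle is the quantitative algebra of the core step: reducing the lemma's rational upper bound to a tractable polynomial inequality in $u=n(1-p)$, simultaneously handling the precondition $K\le L+1/L-1$ and the target-ratio constraint, and optimising across $\delta\in(0,1)$ to pin down the explicit constant $c=4.76603$. This is where the improvement occurs: the combined polynomial/precondition thresholds are of constant order in $u=n\delta$, so $n$ is forced only to grow linearly in $1/\delta$, replacing the quadratic sample sizes produced by the Chebyshev and Chernoff bounds recalled in Remark~\ref{rem:new:bound}.
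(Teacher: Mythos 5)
There is a genuine mismatch here: your proposal is not a proof of Lemma~\ref{lem:known} at all, but rather a sketch of the proof of Theorem~\ref{th:newbound}, which \emph{uses} that lemma as a black box. Lemma~\ref{lem:known} is a general moment inequality about an arbitrary mean-zero random variable $X$ with given second and fourth moments $M_2,M_4$; nothing in it mentions binomial variables, the parameter $\delta$, the sample size $n$, or the constant $c=4.76603$. Your entire argument --- identifying $\cnt$ as Binomial$(n,p)$, setting $a=n(1-p)$, verifying $L<1$ and $K\le L+1/L-1$, and optimising over $\delta$ to extract $c$ --- presupposes the inequality $\prob{X\ge a}\le (M_4-M_2^2)/(M_4-2M_2a^2+a^4)$ and never establishes it. (For the record, the paper does not prove this lemma either: it imports it verbatim from Case~2 of Theorem~2.3 of He, Zhang, and Zhang, so the ``paper's own proof'' is a citation.)

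If you did want to prove the lemma itself, the standard route is a Chebyshev--Cantelli-type argument one level up: for any $\alpha<a^2$, the function $x\mapsto (x^2-\alpha)^2/(a^2-\alpha)^2$ is nonnegative everywhere and is at least $1$ whenever $x\ge a$, so $\prob{X\ge a}\le (M_4-2\alpha M_2+\alpha^2)/(a^2-\alpha)^2$. Minimising the right-hand side over $\alpha$ gives $\alpha^*=(M_2a^2-M_4)/(a^2-M_2)$, which is well defined because $L<1$ forces $a^2>M_2$, and substituting $\alpha^*$ yields exactly the claimed bound; the hypothesis $K\le L+1/L-1$ is what guarantees that this optimiser satisfies $\alpha^*<a^2$ so that the domination argument is valid (this is precisely the case split in the cited theorem). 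Your binomial computations belong to the proof of Theorem~\ref{th:newbound}, where they do essentially match the paper's argument, but as a proof of the statement actually posed they leave the central inequality entirely unproven.
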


\pmsn
\begin{proof} (Of Theorem~\ref{th:newbound})
First, note that \cnt is binomial: the number of successes  in $n$ selections (when the selected $x$'s satisfy the condition).
Hence $\ev{\cnt}=np$.
Then, for $X=\cnt-np$ or $X=np-\cnt$, we have that $\ev{X}=0$. Moreover, for $q=1-p$, we have
\[
M_2=npq \quad \text{and} \quad M_4=npq+npq(3n-6)pq.
\]
\underline{First statement}: $p<1-\delta$.
As $\cnt\le n$ always holds, we have $\prob{\cnt=n}=\prob{\cnt\ge n}=\prob{\cnt-np\ge n-np}=\prob{X\ge nq}$, where $X=\cnt-np$. Let $a=nq$.
With the notation in Lemma~\ref{lem:known}, we have that 
\[
K=\frac{3pqn+1-6pq}{npq},\quad 
L=\frac{p}{nq},\quad
L+\frac1L-1=\frac{p^2+n^2q^2-npq}{npq}.
\]
As $n>1/\delta$ and $q>\delta$, we have that $L<1$. 
We show now that $K\le L+1/L-1$ so that we can apply Lemma~\ref{lem:known}. Indeed $K\le L+1/L-1$ is equivalent to 
\begin{equation}\label{eq:quadratic}
n^2q-4n(1-q)+4-5q\ge0.	
\end{equation}
Using $n\to\infty$ as variable, the discriminant is $4(4-12q+9q^2)=36(q-2/3)^2\ge0$, hence, \eqref{eq:quadratic} is equivalent to $n\ge \big(2-2q+3|q-2/3|\big)/q$, which holds true as $n>4/\delta$, $1/\delta>1/q$, and $2/3>|q-2/3|$ for all $q>0$.
Lemma~\ref{lem:known} implies that 
\[
\prob{\cnt=n}=\prob{X\ge a}\le \frac{M_4-M_2^2}{M_4-2M_2a^2+a^4}.
\]
We need to show that the above fraction is $\le 1/4$. One confirms that this is the case iff 
\[
f(n) = n^3q^3-2n^2(q^2-q^3)-5n(q+q^3-2q^2)-3+21q-36q^2+18q^3\ge0.
\]
We have that $\lim_{n\to-\infty}f(n)=-\infty$ and $\lim_{n\to\infty}f(n)=\infty$ and that $f(n)$ has a local maximum $n_1$ and a local minimum $n_2>n_1$ with
\[
n_1,n_2 = \frac{(1-q)(2\pm\sqrt{19})}{3q}.
\]
The local maximum and minimum values are the zeroes of $f'(n)$, and $n_2$ is the local minimum as $f''(n_2)>0$. Hence, $f(n)$ is increasing for $n\ge n_2$; so it is sufficient to show that $f(n)>0$ for $n= c/q$ (which would imply that $f(n)>0$ for all $n\ge c/\delta$). 
We have that $f(c/q)>0$, if $c^3\ge 2c^2+5c+39$ and this latter condition is satisfied for $c=4.76603$ as required---the condition is not satisfied for $c=4.76602$.
\pssn
\underline{Second statement}: $p>\delta$. This is symmetric to the previous case when we note the following
\begin{itemize}
    \setlength{\itemsep}{0pt}%
    \setlength{\parskip}{0pt}%
    \vspace{-0.7\topsep}
	\item As $\cnt\ge0$ always holds, we have $\prob{\cnt=0}=\prob{\cnt\le0}=\prob{np-\cnt\ge np}=\prob{X\ge np}$, where $X=np-\cnt$. 
	\item Let $a=np$. 
	\item We have $p>\delta$, whereas before we had that $q>\delta$. Thus the required inequality $\prob{\cnt=0}\le 1/4$ follows if in the  proof of the first statement we switch the roles of $p$ and $q$.
\qedhere
\end{itemize}
\end{proof}

\section{PRAX Algorithms for Emptiness and Universality}\label{sec:prax}
In this section, we derive PRAX algorithms that can be applied to hard universality and emptiness problems in any domain that has a tractable distribution (Theorem~\ref{th:general:prax}) or a polynomially samplable finite distribution  (Corollary~\ref{cor:finite:prax}).
\pnsi
\emph{We consider problems (formal languages) in which every instance   
\sspec specifies a domain $X=X_{\sspec}$ and a subset description; that is, each $\sspec$ describes a subset $\lang(\sspec)$ of $X$. We assume that  \emph{$\lang(\sspec)$ is polynomially decidable in $X$}. 
\pnsi
Each decision problem and each PRAX algorithm refers to a specific \emph{family} \dt of  tractable distributions.  Each problem instance \sspec implies a particular  tractable distribution $\dt_{\sspec}\in \dt$.
However, for the sake of notational simplicity, 
we write $\dt(\sspec)$ to mean $\dt_{\sspec}\big(\lang(\sspec)\big)$.
}
\pnsi
As an example, $\lbdu{\diri}$ is in fact a family of Dirichlet word distributions: if \autn is an NFA over some alphabet \als then the expression $\lbdu{\diri}(\autn)$ implies the Dirichlet word distribution on $\als^*$ = the words over the alphabet of \autn.
As a second example, $\ubfam$ is the family of \emdef{uniform block distributions}: if \autn is a block NFA of some word length $\ell$ over some alphabet \als, then the expression $\ubfam(\autn)$ implies the uniform distribution on $\als^\ell$.

\pssn\textbf{Universality problems.}
A \emdef{universality problem} relative to some tractable distribution family $\dt$ is a language\footnote{Following the presentation style of \cite[pg.~193]{Gold:2008} and \cite{KoMaMoRo:2023}, we refrain from cluttering the notation with the use of a variable for the set of instances.}
\[
U_\dt=\{\sspec : \dt(\sspec)=1\}.
\]
For example, in the NFA universality problem, each instance \autn is an NFA and the question is whether $\lang(\autn)=\als^*$, which is equivalent to whether $\dt(\autn)=1$. 
Each real $\err\in(0,1)$ defines the \emdef{approximation language}
\[
U_{\dt,\err}=\{\sspec :\dt(\sspec)\ge1-\err\}.
\] 
The idea here is that, as it is hard to tell whether $\dt(\sspec)=1$, we might be \emlong{happy to know that $\dt(\sspec)\ge1-\err$,} where \err is called the (approximation) \emdef{tolerance}. 
As $U_{\dt,\err}$ can be harder than $U_{\dt}$ \cite{KoMaMoRo:2023}, we define a 
\emdef{PRAX algorithm  for $U_{\dt}$} to be a randomized decision algorithm\footnote{A \underline{decision} algorithm halts on every input with the answer \true or \false. }
$A(\sspec,\err)$ satisfying the following conditions:   
\begin{enumerate}
    \setlength{\itemsep}{1pt}%
    \setlength{\parskip}{0pt}%
    \item if $\sspec\in U_{\dt}$ then $A(\sspec,\err)=\true$;
    \item if $\sspec\notin U_{\dt,\err}$ then $\prob{A(\sspec,\err)=\false}\ge3/4$;
    \item $A(\sspec,\err)$ works in polynomial time w.r.t. $1/\err$ and the size of $\sspec$.
\end{enumerate}
When $A(\sspec,\err)$ gives the answer \false, this answer is correct: $\sspec\notin U_\dt$.
If $A(\sspec,\err)$ returns \true then probably $\sspec\in U_{\dt,\err}$, in the sense that $\sspec\notin U_{\dt,\err}$ would imply $\prob{A(\sspec,\err)=\false}\ge 3/4$.
Thus, when the algorithm returns \true, the answer is \emlong{correct within the tolerance $\err$} (i.e., $\sspec\in U_{\dt,\err}$) with  probability $\ge 3/4$. 
The algorithm returns the wrong answer exactly when it returns \true and $\sspec\notin U_{\dt,\err}$, but this happens with probability $\le1/4$.

\pmsn
\textbf{Emptiness problems.}
An \emdef{emptiness problem} relative to  some tractable distribution family  $\dt$ is a language 
\[
E_\dt=\{\sspec : \dt(\sspec)=0\}.
\]
For example, in the NFA equivalence problem, each instance is a pair $(\autm,\autn)$ of NFAs and the question is whether 
$\lang(\autm)\triangle\lang(\autn) =\emptyset$, 
which is equivalent to whether $\dt\big(\lang(\autm)\triangle\lang(\autn)\big)=0$. 
As before, each tolerance $\err\in(0,1)$ defines an
approximation  language 
\[
E_{\dt,\err}=\{\sspec :\dt(\sspec)\le \err\}.
\] 
Thus, $E_{\dt,\err}$ consists of instances for which the subset $\lang(\sspec)$  is very small, so even when a randomized algorithm detects no element in $\lang(\sspec)$, we are happy to accept that $\lang(\sspec)$ is close to empty.
We define a \emdef{PRAX algorithm for $E_\dt$}  to be a randomized decision algorithm $A(\sspec,\err)$ such that 
\begin{enumerate}
    \setlength{\itemsep}{1pt}%
    \setlength{\parskip}{0pt}%
	\item If $\sspec\in E_{\dt}$ then $A(\sspec,\err)=\true$.
	\item If $\sspec\notin E_{\dt,\err}$ then $\prob{A(\sspec,\err)=\false}\ge 3/4$.
	\item $A(\sspec,\err)$ works in polynomial time w.r.t. $1/\err$ and the size of $\sspec$.
\end{enumerate}
When $A(\sspec,\err)$ gives the answer \false, this answer is correct: 
	$\sspec\notin E_{\dt}$. 
If $A(\sspec,\err)$ returns \true then probably $\sspec\in E_{\dt,\err}$, in the sense that $\sspec\notin E_{\dt,\err}$ would imply $\prob{A(\sspec,\err)=\false}\ge 3/4$. 
Thus, whenever the algorithm returns the answer \false, this answer is correct: $\sspec\notin E_{\dt}$; when the algorithm returns \true, the answer is \emlong{correct within the tolerance $\err$} (i.e., $\sspec\in E_{\dt,\err}$) with  probability $\ge 3/4$. 
The algorithm returns the wrong answer exactly when it returns \true and $\sspec\notin E_{\dt,\err}$, but this happens with probability $\le1/4$.

%

\begin{figure}[t]
\begin{center}
\parbox{0.42\textwidth}
{
$\emptiness_{\dt}\,(\sspec,\err)$
\pnsn \hspace*{1\algoindent}
$c := 4.76603$;
\pnsn \hspace*{1\algoindent}
$n := \lceil c/(\err/2)\rceil$;
\pnsn \hspace*{1\algoindent}
$M := \maxlen_{\dt}(\err/2)$;
\pnsn \hspace*{1\algoindent}
for $i := 0,\ldots,M$
\pnsn \hspace*{2.2\algoindent}
$t_i := \probd{\dt}(i)$;
\pnsn \hspace*{1\algoindent}
$r := 1-\sum_{i=0}^{M} t_i$;
\pnsn \hspace*{1\algoindent}
$D=\big(t_0,\ldots,t_M,r\big)$;
\\ \hspace*{1\algoindent}
repeat $n$ times:   
\\\hspace*{2.2\algoindent} 
$\ell := \selectfin(D)$;
\\ \hspace*{2.2\algoindent} 
if ($\ell\neq\none$) 
\\ \hspace*{3\algoindent}
$x := \szselect_{\dt}(\ell)$;
\\ \hspace*{3\algoindent}
if $\big(x\in\lang(\sspec)\big)$ 
\\ \hspace*{4\algoindent}
return \false
\\ \hspace*{1\algoindent} 
return \true;
}
\quad
\parbox{0.42\textwidth}
{
$\universality_{\dt}\,(\sspec,\err)$
\pnsn \hspace*{1\algoindent}
$c := 4.76603$;
\pnsn \hspace*{1\algoindent}
$n := \lceil c/(\err/2)\rceil$;
\pnsn \hspace*{1\algoindent}
$M := \maxlen_{\dt}(\err/2)$;
\pnsn \hspace*{1\algoindent}
for $i := 0,\ldots,M$
\pnsn \hspace*{2.2\algoindent}
$t_i := \probd{\dt}(i)$;
\pnsn \hspace*{1\algoindent}
$r := 1-\sum_{i=0}^{M} t_i$;
\pnsn \hspace*{1\algoindent}
$D=\big(t_0,\ldots,t_M,r\big)$;
\\ \hspace*{1\algoindent}
repeat $n$ times:   
\\\hspace*{2.2\algoindent} 
$\ell := \selectfin(D)$;
\\ \hspace*{2.2\algoindent} 
if ($\ell\neq\none$) 
\\ \hspace*{3\algoindent}
$x := \szselect_{\dt}(\ell)$;
\\ \hspace*{3\algoindent}
if $\big(x\notin\lang(\sspec)\big)$ 
\\ \hspace*{4\algoindent}
return \false
\\ \hspace*{1\algoindent} 
return \true;
}
%
\caption{
PRAX algorithms for the emptiness problem $E_{\dt}$ (on the left) and for the universality problem $U_{\dt}$ (on the right), where \dt is a family of locally tractable distributions. The time complexity is $O$-bounded by $\>M\cdot\mathrm{Cost}\big(\probd{\dt}(M)\big)+
(1/\err)\cdot\big(M+\mathrm{Cost}\big(\szselect_{\dt}(M)\big)+\mathrm{Cost}\big(x\in\lang(\sspec)\big)\big)$.
}\label{fig:prax}
\end{center}
\end{figure}

\begin{theorem}\label{th:general:prax}
Let \dt be a locally tractable distribution family.
Let $E_{\dt}$ be any emptiness problem  and let $U_{\dt}$ be any universality problem, where their instances \sspec are subset descriptions with  polynomially decidable $\lang(\sspec)$.
The two algorithms in Fig.~\ref{th:general:prax} are PRAX algorithms  (relative to \dt) for $E_{\dt}$ and $U_{\dt}$.
\end{theorem}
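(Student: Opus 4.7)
The plan is to verify the three PRAX conditions for the emptiness algorithm on the left of Fig.~\ref{fig:prax}; the universality argument is symmetric. Fix an instance $\sspec$ and set $T=\dt_\sspec$, $X=X_\sspec$, $M=\maxlen_\dt(\err/2)$ and $F=\{x\in X:\sz(x)\le M\}$. The initialization block builds exactly the distribution $P_M$ appearing in the proof of Theorem~\ref{th:locally:is:tractable}, so by that theorem each pass of the repeat loop effectively draws one sample from the truncated distribution $T^F$: either the symbol \none with probability $T(X\setminus F)$, or an element $x\in F$ with probability $T(x)$. Local tractability guarantees $T(X\setminus F)\le\err/2$. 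If $\sspec\in E_\dt$ then $T(\lang(\sspec))=0$, so $T(\lang(\sspec)\cap F)=0$, the inner test on $x$ never fires, and the algorithm correctly returns \true.

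\textbf{Main estimate for non-close instances.} Suppose $\sspec\notin E_{\dt,\err}$, that is $T(\lang(\sspec))>\err$, and let $p$ denote the probability that a single iteration triggers \false. Then $p=T(\lang(\sspec)\cap F)$, and the inclusion $\lang(\sspec)\subseteq(\lang(\sspec)\cap F)\cup(X\setminus F)$ yields
\[
p \;\ge\; T(\lang(\sspec))-T(X\setminus F) \;\ge\; \err-\err/2 \;=\; \err/2.
\]
Viewing the loop as a run of $\parestim$ from Fig.~\ref{fig:param:est} with distribution $T^F$, tolerance $\delta=\err/2$, and sample size $n=\lceil c/\delta\rceil$, the second part of Theorem~\ref{th:newbound} gives $\prob{\cnt=0}\le 1/4$. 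Since the algorithm outputs \false exactly when $\cnt\ge 1$, we conclude $\prob{A(\sspec,\err)=\false}\ge 3/4$.

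\textbf{Universality, complexity, and the main subtlety.} For the universality algorithm on the right, the role of $\lang(\sspec)$ is replaced by its complement $X\setminus\lang(\sspec)$: if $T(\lang(\sspec))=1$ then $T(F\setminus\lang(\sspec))=0$ and \true is returned; if $T(\lang(\sspec))<1-\err$, the same two-term estimate gives $p=T(F\setminus\lang(\sspec))\ge\err/2$, and Theorem~\ref{th:newbound} once more yields $\prob{\cnt=0}\le 1/4$. For the polynomial time bound, local tractability supplies $M=O((1/\err)^k)$ for some $k$ together with polynomial-time implementations of $\maxlen_\dt$, $\probd{\dt}(\cdot)$ and $\szselect_\dt(\cdot)$; combined with polynomial decidability of $\lang(\sspec)$ and summed over the $O(M)$ initialization steps and the $O(1/\err)$ loop iterations, this gives the complexity stated in the caption of Fig.~\ref{fig:prax}. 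The one delicate point, and the main obstacle to get right, is the split of the tolerance $\err$ into a truncation budget $\err/2$ (passed to $\maxlen_\dt$) and an estimation budget $\err/2$ (fed into Theorem~\ref{th:newbound}): if either half were replaced by the full $\err$, the chain $\err-\err/2\ge\err/2$ above would collapse and the per-sample probability $p$ could not be bounded below by the tolerance required by Theorem~\ref{th:newbound}.
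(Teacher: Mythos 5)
Your proof is correct and follows essentially the same route as the paper's: the same $\err/2$ split of the tolerance into a truncation budget and an estimation budget, the same inclusion bound $p\ge T(\lang(\sspec))-T(X\setminus F)$, and the same appeal to Theorem~\ref{th:newbound} (for universality you apply its second bullet to the complementary, \false-triggering condition, where the paper applies the first bullet to the \true-compatible condition --- these are equivalent by symmetry). The only nit is that your middle inequality should be strict, $p>\err-\err/2=\err/2$ (which does hold, since $\sspec\notin E_{\dt,\err}$ gives $T(\lang(\sspec))>\err$), because Theorem~\ref{th:newbound} requires $p>\delta$ rather than $p\ge\delta$.
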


\begin{proof}
First, we confirm that both algorithms work in time polynomial w.r.t. to $1/\err$ and  the size of \sspec: 
(i) as \dt is locally tractable, the functions $\maxlen_{\dt}$ and $\probd{\dt}$ work in polynomial time, and the value of $M$ is polynomially large; 
(ii) as \dt is locally tractable, the function $\szselect(\ell)$ works in polynomial time; 
(iii) as the subset $\lang(\sspec)\subseteq X$ is polynomially decidable in $X$, the tests $x\in\lang(\sspec)$ and $x\notin\lang(\sspec)$ can be done in polynomial time. 
\pnsi
As $X$ is the domain of a locally tractable distribution, there is a size function $\sz:X\to \N_0$.
Let $F$ be the finite subset $\szinv(\N_0^{\le M})$ of $X$; that is, all elements of $X$ whose size is $\le M$. 
Then, $X-F=\szinv(\N_0^{>M})$.
By definition of $\maxlen_{\dt}$, we have that $\dt(X-F)\le \err/2$.
\pnsi
For brevity, we use $A(\sspec,\err)$ for  $\emptiness_{\dt}(\sspec,\err)$. We show that $A(\sspec,\err)$ is a PRAX algorithm for $E_{\dt}$. 
Let \sspec be an instance of $E_{\dt}$. If $\sspec\in E_{\dt}$ then $\dt(\sspec)=0$, hence $\lang(\sspec)=\emptyset$ and the algorithm must return \true.
If $\sspec\notin E_{\dt,\err}$, then $\dt(\sspec)>\err$. We use Theorem~\ref{th:newbound} to show that $\prob{A(\sspec,\err)=\true}\le 1/4$. 
First, note that the algorithm $A(\sspec,\err)$ is logically equivalent to the version of the random process in Fig.~\ref{fig:param:est} where (i) the condition $\cond$ is ``$x\neq\none$ and $x\in\lang(\sspec)$'', which is equivalent to ``$x\in\lang(\sspec)\cap F$''; and (ii) instead of $\mathrm{cnt}/n$ the process returns \true if cnt = 0 and \false otherwise. Hence, the parameter $p$ in Theorem~\ref{th:newbound}~is 
$$
p
=\dt\big(\lang(\sspec)\cap F\big) 
\ge T\big(\lang(\sspec))- T\big(X-F\big)>\err-\err/2=\err/2.
$$
Hence, Theorem~\ref{th:newbound} implies that $\prob{A(\sspec,\err)=\true}\le 1/4$, as required.
\pnsi
For brevity, we use $B(\sspec,\err)$ for  $\universality_{\dt}(\sspec,\err)$. We show that $B(\sspec,\err)$ is a PRAX algorithm for $U_{\dt}$. 
Let \sspec be an instance of $U_{\dt}$. If $\sspec\in U_{\dt}$ then $\dt(\sspec)=1$, hence $\lang(\sspec)=X$ and the algorithm must return \true.
If $\sspec\notin U_{\dt,\err}$, then $\dt(\sspec)<1-\err$. We use Theorem~\ref{th:newbound} to show that $\prob{B(\sspec,\err)=\true}\le 1/4$. 
Note that the algorithm $B(\sspec,\err)$ is logically equivalent to the version of the random process in Fig.~\ref{fig:param:est} where (i) the condition $\cond$ is ``$x=\none$ or $x\in\lang(\sspec)\cap F$''; and (ii) instead of $\mathrm{cnt}/n$ the process returns \true if cnt = $n$ and \false otherwise. Hence, the parameter $p$ in Theorem~\ref{th:newbound} is 
$$
p
=\dt\big(X-F) + \dt\big(\lang(\sspec)\cap F\big)
\le  T\big(X-F\big) + T\big(\lang(\sspec))  
< \err/2+1-\err=1-\err/2.
$$
Hence, Theorem~\ref{th:newbound} implies that $\prob{B(\sspec,\err)=\true}\le 1/4$, as required.
\end{proof}

\begin{remark}\label{rem:choice:of:M}
The choice of the argument $\err/2$ in the function call $\maxlen_{\dt}(\err/2)$ is not insignificant as it affects the magnitude  of $nM$, which is a factor of the time complexity of the algorithm.
More specifically, consider using $M=\maxlen_{\dt}(\delta)$
and $n=\lceil c/(\err-\delta)\rceil$, for some $\delta<\err$,
and $T=\diri=$ the Dirichlet length distribution with $t=2$ and $d=1$. Then, $nM=O\big(c/(\err-\delta)\cdot1/\delta\big)$, using $M=\lceil\sqrt[t-1]{1/\delta}\rceil+d-1$ as suggested  in Lemma~\ref{lem:dirichlet}. 
Then, the minimum value of $\big(c/(\err-\delta)\cdot1/\delta\big)$, as $\delta$ varies, is exactly equal to $\err/2$. 
\end{remark}

\begin{remark}
The above theorem holds if we replace ``locally tractable'' with ``tractable''. In this case, we would change the algorithms by omitting the lines involving $M$ and $D$, and using only one selection: $\selectany(\err/2)$. 
However, the displayed version is more detailed and leaves less work to do when we consider specific problem domains---see Section~\ref{sec:concrete}.
\end{remark}

\pssn
\textbf{Version of the algorithms for instances with finite domains.}
When the problem instances \sspec are descriptions of subsets of a finite domain $X=X_{\sspec}$ 
having polynomially samplable distributions, the algorithms in Fig.~\ref{fig:prax} can be simplified by omitting references related to the truncated distribution---see Fig.~\ref{fig:fin:prax}.
However, we need to assume the existence of a polynomial algorithm 
$\distrparameter(\sspec)$ that returns the index $k$ of the finite distribution $\dt_k$ associated to the instance \sspec.
For example, if \sspec is a block NFA of some word length $\ell$, then the algorithm $\distrparameter(\sspec)$ would return the value $\ell$---which can be computed in time $O(|\sspec|)$.

\begin{figure}[t]
\begin{center}
\parbox{0.4\textwidth}
{
$\finemptiness_{\dt}\,(\sspec,\err)$
\pnsn \hspace*{1\algoindent}
$c := 4.76603$;
\pnsn \hspace*{1\algoindent}
$k := \distrparameter(\sspec)$;
\pnsn \hspace*{1\algoindent}
$n := \lceil c/(\err/2)\rceil$;
\\ \hspace*{1\algoindent}
repeat $n$ times:   
\\ \hspace*{2\algoindent}
$x=\sample(k)$;
\\ \hspace*{2\algoindent}
if $\big(x\in\lang(\sspec)\big)$ 
\\ \hspace*{3\algoindent}
return \false
\\ \hspace*{1\algoindent} 
return \true;
}
\quad
\parbox{0.4\textwidth}
{
$\finuniversality_{\dt}\,(\sspec,\err)$
\pnsn \hspace*{1\algoindent}
$c := 4.76603$;
\pnsn \hspace*{1\algoindent}
$k := \distrparameter(\sspec)$;
\pnsn \hspace*{1\algoindent}
$n := \lceil c/(\err/2)\rceil$;
\\ \hspace*{1\algoindent}
repeat $n$ times:   
\\ \hspace*{2\algoindent}
$x=\sample(k)$;
\\ \hspace*{2\algoindent}
if $\big(x\notin\lang(\sspec)\big)$ 
\\ \hspace*{3\algoindent}
return \false
\\ \hspace*{1\algoindent} 
return \true;
}
%
\caption{
PRAX algorithms for the emptiness problem $E_{\dt}$ (on the left) and for the universality problem $U_{\dt}$ (on the right), in which the instances are descriptions of finite subsets and \dt is a polynomially samplable distribution family. 
}\label{fig:fin:prax}
\end{center}
\end{figure}

\begin{corollary}\label{cor:finite:prax}
Let $\dt=(\dt_k)$ be a polynomially samplable  family of finite distributions.
Let $E_{\dt}$ be any emptiness problem  and let $U_{\dt}$ be any universality problem such that 
(i) the problem instances \sspec are  descriptions of finite subsets with polynomially decidable $\lang(\sspec)$; 
(ii) there is a polynomial algorithm $\distrparameter(\sspec)$ that returns the index $k$ of the distribution $\dt_k$ that matches the domain of $\sspec$: $X_{\sspec}=\dom\dt_k$.
Then, the two algorithms in Fig.~\ref{th:general:prax} are PRAX algorithms  (relative to \dt) for $E_{\dt}$ and $U_{\dt}$.
\end{corollary}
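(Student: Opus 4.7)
The plan is to specialize the argument of Theorem~\ref{th:general:prax} to the finite setting, where the whole truncation machinery becomes trivial: there is no low-probability tail to omit, so the ``truncated distribution'' is just $\dt_k$ itself, and the two-level sampling collapses into a single call to $\sample(k)$. In effect, the finite case is obtained from the infinite one by taking $F = \dom\dt_k$, making the events ``$x = \none$'' disappear and obviating the need for $\maxlen_{\dt}$, $\probd{\dt}$, and $\szselect_{\dt}$.

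First I would verify the polynomial time bound. Given $\sspec$, the call $\distrparameter(\sspec)$ runs in polynomial time by hypothesis and returns an index $k$ that is polynomially bounded in $|\sspec|$ (as in the block-NFA example where $k=\ell$); then $\sample(k)$ runs in time polynomial in $k$ because $\dt=(\dt_k)$ is polynomially samplable. Since $n = \lceil c/(\err/2)\rceil = O(1/\err)$ and every membership test $x\in\lang(\sspec)$ is polynomial by assumption, both algorithms in Fig.~\ref{fig:fin:prax} run in time polynomial in $1/\err$ and $|\sspec|$.

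Next I would handle the two trivial directions. For $\finemptiness_{\dt}$: if $\sspec\in E_{\dt}$ then $\dt_k\big(\lang(\sspec)\big)=0$, so every sample $x$ drawn from $\dt_k$ lies outside $\lang(\sspec)$, and the algorithm returns $\true$. For $\finuniversality_{\dt}$: if $\sspec\in U_{\dt}$ then $\dt_k\big(\lang(\sspec)\big)=1$, so $\dom\dt_k\subseteq\lang(\sspec)$ and every sample lies in $\lang(\sspec)$, so again the algorithm returns $\true$.

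For the probabilistic guarantees I would apply Theorem~\ref{th:newbound} with $\delta = \err/2$. Each algorithm is logically an instance of the $\parestim$ process of Fig.~\ref{fig:param:est} with $D=\dt_k$ and condition $C$ given by ``$x\in\lang(\sspec)$'', so that $p = \dt_k\big(\lang(\sspec)\big) = \dt(\sspec)$ and $n\ge c/\delta$ by construction. For the emptiness side, $\sspec\notin E_{\dt,\err}$ gives $p > \err > \err/2 = \delta$, and the second bullet of Theorem~\ref{th:newbound} yields $\prob{\mathrm{cnt}=0}\le 1/4$, i.e., the algorithm returns $\true$ incorrectly with probability at most $1/4$. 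For the universality side, $\sspec\notin U_{\dt,\err}$ gives $p < 1-\err < 1-\err/2 = 1-\delta$, and the first bullet yields $\prob{\mathrm{cnt}=n}\le 1/4$, again bounding the probability of the wrong $\true$ answer by $1/4$. The only mild obstacle is the implicit assumption that $k = \distrparameter(\sspec)$ is polynomially bounded in value (not merely in bit-length), so that the polynomial-time guarantee of $\sample(k)$ — which, per Definition~\ref{def:samplable}, is polynomial in $k$ given in unary — translates into polynomial time with respect to $|\sspec|$; otherwise the proof is a direct transcription of the proof of Theorem~\ref{th:general:prax} with all truncation artefacts stripped away.
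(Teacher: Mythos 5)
Your proof is correct and follows exactly the route the paper intends: the paper states this corollary without an explicit proof, treating it as the specialization of Theorem~\ref{th:general:prax} (via Theorem~\ref{th:newbound} with $D=\dt_k$, $p=\dt(\sspec)$, and $\delta=\err/2$) in which the truncation machinery and the \none{} outcome disappear. Your closing remark about $k$ needing to be polynomially bounded in value (since $\sample(k)$ is polynomial in $k$ given in unary) is a legitimate observation about an assumption the paper leaves implicit, and does not affect correctness.
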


\section{PRAX Algorithms for Some Concrete Problems}\label{sec:concrete}
Here, we consider a few specific emptiness and universality problems in various domains, where we can give concrete complexity estimates  and provide test results of their performance.

\subsection{NFA Universality Revisited and Tautology Testing}\label{sec:nfa:univ}
In \cite{KoMaMoRo:2023}, the authors show a PRAX algorithm for block NFA universality that works in time $O\big(\ell\,|\aut|(1/\err)^2\big)$, where $\ell$ is the word length of $\aut\in\BNFA$. 
The factor $(1/\err)^2$ is simply the number $n$ of words to sample. Here we have the following improvement as a consequence of Theorem~\ref{th:newbound} and Corollary~\ref{cor:finite:prax}.

\begin{corollary}\label{cor:block:nfa}
	There is a PRAX algorithm for block NFA universality that works in time $O\big(\ell\,|\aut|(1/\err)\big)$, where $\ell$ is the word length of the given NFA \aut.
\end{corollary}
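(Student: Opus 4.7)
The plan is to apply Corollary~\ref{cor:finite:prax} to the uniform block distribution family $\ubfam$ and then simply tally the cost of one iteration of the universality algorithm in Fig.~\ref{fig:fin:prax}.

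First I would verify the preconditions of Corollary~\ref{cor:finite:prax}. The family $\ubfam=(\ubfam_\ell)$, where $\ubfam_\ell$ is the uniform distribution on $\als^\ell$, is polynomially samplable: the routine $\sample(\ell)$ just draws $\ell$ symbols uniformly from $\als$ using $\ell$ calls of $\tosscoin(\cdot)$, so it runs in $O(\ell)$ time. Given an instance $\aut\in\BNFA$, the routine $\distrparameter(\aut)$ returns the fixed word length $\ell$ of $\aut$, which can be read off from the automaton's structure in time $O(|\aut|)$. Finally, $\lang(\aut)\subseteq \als^\ell$ is polynomially decidable, since testing $w\in\lang(\aut)$ for a word $w$ of length $\ell$ is just a standard NFA simulation costing $O(\ell\,|\aut|)$.

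Next I would invoke Corollary~\ref{cor:finite:prax} to conclude that the algorithm $\finuniversality_{\ubfam}(\aut,\err)$ is a PRAX algorithm for block NFA universality. It only remains to compute its running time. The algorithm makes one call to $\distrparameter$ at cost $O(|\aut|)$ and then executes $n=\lceil c/(\err/2)\rceil=O(1/\err)$ iterations, where $c=4.76603$; the improved linear sample size is exactly what Theorem~\ref{th:newbound} buys us, replacing the quadratic $(1/\err)^2$ factor of \cite{KoMaMoRo:2023}. Each iteration performs a uniform sample ($O(\ell)$) followed by an NFA membership test ($O(\ell\,|\aut|)$), so the total cost is
\[
O(|\aut|)+O(1/\err)\cdot O(\ell\,|\aut|)=O\big(\ell\,|\aut|(1/\err)\big),
\]
as claimed.

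There is no real obstacle here; the corollary is essentially a bookkeeping consequence of Corollary~\ref{cor:finite:prax} combined with the linear sample-size bound of Theorem~\ref{th:newbound}. The only point worth stating explicitly is the per-iteration cost breakdown, since the $(1/\err)$ improvement is meaningful only once one has checked that the old $O(\ell\,|\aut|)$ cost per sample has not been inflated in the new framework.
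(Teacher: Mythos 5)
Your proposal is correct and follows exactly the route the paper intends: the paper gives no explicit proof but states the corollary as a consequence of Theorem~\ref{th:newbound} and Corollary~\ref{cor:finite:prax}, which is precisely the instantiation (uniform block distributions $\ubfam$, $\distrparameter$ returning $\ell$, $O(\ell\,|\aut|)$ membership test, $n=O(1/\err)$ samples) that you carry out. Your write-up is, if anything, more detailed than the paper's own justification.
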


\pbsn
\textbf{Tautology testing.}
Testing whether a CNF proposition \sspec with some $k$ variables is a tautology is a universality problem $U_{\ubfam}$: whether all $2^k$ truth assignments satisfy \sspec, or equivalently, whether $\ubfam(\sspec)=1$, where recall that \ubfam is the  family of uniform block distributions. 
The problem of testing whether a proposition is a tautology  is mentioned in \cite{Fortnow:2022} as a good candidate for an interesting problem not in the class \classP. 
The approximate version $U_{\ubfam,\err}$ of $U_{\ubfam}$ is whether $\ubfam(\sspec)\ge1-\err$.
This approximate version could be useful in a non rigid decision making scenario where it is acceptable to know whether a proposition  $\alpha\to\beta$ is true in ``most cases'' (i.e., the ratio of the  satisfying truth assignments over all truth assignments is $\ge1-\err$). 
In $\alpha\to\beta$, which can easily be converted to CNF, $\alpha$ could represent the set of premises and $\beta$ could be the possible action/decision to make.
From the point of view of this paper, the PRAX algorithm for block NFA universality, where the NFAs are over the binary alphabet, can be turned to a PRAX algorithm for testing propositional tautology by simply changing the implementation of the  test ``$x\in\lang(\sspec)$'' from NFA membership to truth assignment testing. Thus, instead of the cost $O(\ell|\aut|)$ of NFA membership, we have the cost $O(|\sspec|)$ of testing whether the CNF proposition \sspec is true.

\begin{corollary}\label{cor:tautology}
	There is a PRAX algorithm for the problem of tautology testing that works in time $O\big(|\sspec|(1/\err)\big)$.
\end{corollary}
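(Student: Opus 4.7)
The plan is to cast tautology testing as a universality problem for the family $\ubfam$ of uniform block distributions and then invoke Corollary~\ref{cor:finite:prax}. Given a CNF proposition $\sspec$ with $k$ variables, the associated domain is the finite set $X_{\sspec}=\{\tvt,\tvf\}^k$ of truth assignments of length $k$, and $\lang(\sspec)\subseteq X_{\sspec}$ is the set of satisfying assignments. Observing that $\ubfam(\sspec)=|\lang(\sspec)|/2^k$, we have $\sspec\in U_{\ubfam}$ iff $\sspec$ is a tautology, so tautology testing coincides with $U_{\ubfam}$. The corollary is immediate once we verify the hypotheses of Corollary~\ref{cor:finite:prax} and bookkeep the running time of the resulting instantiation of $\finuniversality_{\ubfam}$.

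First I would check the three hypotheses. (i) $\ubfam=(\ubfam_k)$ is polynomially samplable: the algorithm $\sample(k)$ simply invokes $\tosscoin(1/2)$ exactly $k$ times, producing a uniformly random element of $\{\tvt,\tvf\}^k$ in $O(k)$ time. (ii) $\lang(\sspec)$ is polynomially decidable in $X_{\sspec}$: given a truth assignment $f\in\{\tvt,\tvf\}^k$, evaluating the CNF proposition $\sspec$ under $f$ takes time $O(|\sspec|)$ by scanning each literal of each clause once. (iii) There is a polynomial algorithm $\distrparameter(\sspec)$ returning the correct index $k$: one pass through $\sspec$ identifies the variables occurring in it, and since we may assume $\sspec$ names its variables $v_1,\ldots,v_k$, the value $k$ is read off in $O(|\sspec|)$ time. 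Corollary~\ref{cor:finite:prax} then yields the PRAX algorithm $\finuniversality_{\ubfam}(\sspec,\err)$ from Fig.~\ref{fig:fin:prax}.

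Finally I would tally the running time. The preprocessing steps compute $c$, call $\distrparameter(\sspec)$, and set $n:=\lceil c/(\err/2)\rceil = O(1/\err)$; this costs $O(|\sspec|)+O(1)$. The main loop executes $n$ iterations, and each iteration performs one call to $\sample(k)$, costing $O(k)=O(|\sspec|)$, followed by one membership test ``$x\notin\lang(\sspec)$'', costing $O(|\sspec|)$ as noted above. Hence the loop costs $O\big(|\sspec|\cdot(1/\err)\big)$, which dominates the preprocessing and gives the claimed bound $O\big(|\sspec|(1/\err)\big)$.

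There is no substantive obstacle to overcome: the work was already done in Theorem~\ref{th:newbound} (which provides the linear sample size $c/\delta$) and in Corollary~\ref{cor:finite:prax} (which packages it for finite, polynomially samplable distributions). The only thing to be mildly careful about is the interface between $\sspec$ and its distribution index $k$, so that $\distrparameter$ is legitimately polynomial in $|\sspec|$ and so that the per-sample membership test $O(|\sspec|)$ truly replaces the NFA-membership cost $O(\ell|\aut|)$ appearing in Corollary~\ref{cor:block:nfa}; both checks are routine.
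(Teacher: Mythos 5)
Your proposal is correct and follows essentially the same route as the paper: the paper also obtains this as an instance of the finite-domain universality PRAX algorithm (Corollary~\ref{cor:finite:prax}) over the uniform block distribution on $\{\tvt,\tvf\}^k$, with the per-sample membership test replaced by $O(|\sspec|)$-time CNF evaluation. Your explicit verification of the three hypotheses of Corollary~\ref{cor:finite:prax} and the note that $k\le|\sspec|$ so sampling costs $O(|\sspec|)$ per iteration are exactly the bookkeeping the paper leaves implicit.
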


\subsection{Emptiness and Universality of 2D Automata}\label{sec:2D}
Here we consider problem instances \sspec which are 2D automata and for which the membership problem is polynomially decidable---in fact, in the class \NL \cite{Lindgren1998Complexity2DPatterns,Smith2019TwoDimensionalAutomata,Smi:2021}.
Each instance \sspec describes a 2D language $\lang(\sspec)$ and implies the 2D length distribution $\diri^2$ with domain $\N_0\times\N_0$ of Example~\ref{ex:dirichlet} as well as the corresponding 2D Dirichlet word distribution $\lbdu{\diri^2}$ with domain $\als^{**}$.

\begin{lemma}\label{lem:2D}
The distribution family $\dt=\lbdu{\diri^2}$ is locally tractable, using the size function 
$\sz:\als^{**}\to\N_0$ with $\sz(z)=\max(|z|_{\mathrm R},|z|_{\mathrm C})$.
Specifically, $\probd{\dt}(m)$ works in time $O(m)$, $\szselect_{\dt}(m)$ works in time $O(m^2)$, and $\maxlen_{\dt}(\delta)$ returns an $M$ in $O\sqrt[t-1]{(1/\delta)}$.
\end{lemma}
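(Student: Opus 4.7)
The plan is to verify the three conditions of Definition~\ref{def:locally:tractable} for the 2D Dirichlet word distribution, using the fact that $\sz(z)=\max(|z|_{\mathrm R},|z|_{\mathrm C})=m$ determines the L-shaped index set of dimensions
\[
I_m \;=\; \{(k,\ell)\in\N_0\times\N_0 : \max(k,\ell)=m\} \;=\; \{(m,0),\ldots,(m,m)\}\cup\{(0,m),\ldots,(m-1,m)\},
\]
which contains exactly $2m+1$ pairs. Since $T(\als^{k\times\ell})=\diri^2(k,\ell)$, we have $T(\szinv(m))=\sum_{(k,\ell)\in I_m}\diri^2(k,\ell)$.

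For condition (ii), the algorithm $\probd{\dt}(m)$ just enumerates the $2m+1$ pairs in $I_m$ and sums the values $\diri(k)\diri(\ell)$. Each term is computable in constant time (assuming unit cost for arithmetic, as stated in Section~\ref{sec:notation}), so the total time is $O(m)$. For condition (i), the algorithm $\szselect_{\dt}(m)$ uses two-level sampling as discussed in Section~\ref{sec:select}: first construct the finite distribution on $I_m$ with $(k,\ell)$-probability $\diri^2(k,\ell)/T(\szinv(m))$ (in time $O(m)$ using $\selectfin$ over the $2m+1$ dimension pairs), and then select a $k\times\ell$ 2D word uniformly by calling $\tosscoin(1/s)$ independently $k\ell$ times. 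Since $k,\ell\le m$, the uniform sampling step costs $O(m^2)$, which dominates.

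For condition (iii), I would decouple the 2D tail from the 1D Dirichlet tail. Observe that
\[
\szinv(\N_0^{>M}) \;=\; \bigcup_{m>M} I_m \;=\; \{(k,\ell) : k>M \text{ or } \ell>M\},
\]
and hence, writing $\epsilon=\diri(\N_0^{>M})$ and using independence of the two coordinates under $\diri^2$,
\[
T\bigl(\szinv(\N_0^{>M})\bigr) \;=\; 1-\diri(\N_0^{\le M})^2 \;=\; 1-(1-\epsilon)^2 \;\le\; 2\epsilon.
\]
Thus it suffices to force $\diri(\N_0^{>M})\le\delta/2$. By Lemma~\ref{lem:dirichlet}, this holds whenever $M\ge\sqrt[t-1]{2/\delta}+(d-1)$, and such an $M$ is computed in constant time. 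This gives the claimed bound $M=O\!\left(\sqrt[t-1]{1/\delta}\right)$, which is polynomial in $1/\delta$ as required.

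The argument is largely computational and I do not anticipate a real obstacle; the one subtle point is the L-shape accounting for $\szinv(m)$ (to avoid double-counting the corner $(m,m)$ or missing the pairs with one coordinate strictly smaller than $m$), together with the clean factorization $1-(1-\epsilon)^2$ that converts the 2D tail bound into a 1D Dirichlet tail bound, allowing Lemma~\ref{lem:dirichlet} to be applied directly. Together these three verifications give the lemma, and the stated complexities follow by inspection of the algorithms above.
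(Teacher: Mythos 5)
Your proof is correct and follows essentially the same route as the paper's: the same L-shaped decomposition of $\szinv(m)$ into $2m+1$ dimension classes summed in $O(m)$ time, the same two-level sampling (finite distribution on the dimension pairs followed by uniform symbol selection) for $\szselect_{\dt}(m)$, and the same reduction of the 2D tail to the 1D Dirichlet tail via independence of the coordinates. The only (harmless) difference is in condition (iii): you bound $1-(1-\epsilon)^2\le 2\epsilon$ and require $\epsilon\le\delta/2$, i.e.\ $M\ge\sqrt[t-1]{2/\delta}+(d-1)$, whereas the paper solves $2x-x^2\le\delta$ exactly as $x\le 1-\sqrt{1-\delta}$; both choices satisfy the requirement and yield $M=O\big(\sqrt[t-1]{1/\delta}\big)$.
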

\begin{proof}
First, we show the existence of the required algorithm $\probd{\dt}(m)$.
Given a size $m\in\N_0$, the set of 2D words of size $m$ is
\[
\szinv(m)=
\als^{m\times0}\cup \als^{m\times1}\cup\cdots\cup\als^{m\times m-1}
\>\cup\>
\als^{0\times m}\cup \als^{1\times m}\cup\cdots\cup\als^{m-1\times m}
\>\cup\>
\als^{m\times m}.
\]  
Let $A_m= \dt\big(\szinv(m)\big)$ and
let $S_m= \N_0^m\times \N_0^{<m}\cup \N_0^{<m}\times\N_0^m\cup \{(m,m)\}$.
Then,
$$
A_m=\diri(S_m)=\;\diri(m)\diri\big(\N_0^{< m}\big) + \diri(m)\diri\big(\N_0^{< m}\big)+\diri(m)\diri(m),
$$ 
which can be computed in time $O(m)$.  
Hence, the required algorithm $\probd{\dt}(m)$ simply computes the above value $A_m$ in time $O(m)$.
\pnsi
Now we show the existence of the algorithm $\szselect_{\dt}(m)$.
The following distribution $\dd_m$ 
\[
\Big(
\frac{\diri(m)\diri(0)}{A_m},\cdots,
\frac{\diri(m)\diri(m-1)}{A_m},
\frac{\diri(0)\diri(m)}{A_m},\cdots,
\frac{\diri(m-1)\diri(m)}{A_m},
\frac{\diri(m)^2}{A_m}
\Big)
\]
with domain $S_m$  can be computed in time $O(m)$. Thus, the required algorithm $\szselect_{\dt}(m)$ works as follows: (i) compute the above finite distribution $\dd_m$; (ii) invoke $\selectfin(\dd_m)$ to get a pair $(k,\ell)$ with $k,\ell\le m$---this works in time $O(m)$; (iii) $k\ell$ times uniformly select each symbol of the required 2D word---this works in time $O(m^2)$. Hence,  $\szselect_{\dt}(m)$ works in time $O(m^2)$.
\pnsi
Lastly, for  given $\delta\in(0,1)$, we need to find an $M$ such 
that $T(\szinv(\N_0^{>M}))\le\delta$, where recall that 
$\szinv(\N_0^{>M})=\{z\in\als^{**}:\max(|z|_{\mathrm R},|z|_{\mathrm C})\le M\}$.
We have that $T(\szinv(\N_0^{>M}))=\diri^2\big(\N_0^{>M}\times\N_0^{}\big)+\diri^2\big(\N_0^{\le M}\times\N_0^{>M}\big) 
=\diri(\N_0^{>M})+\diri(\N_0^{\le M})\diri(\N_0^{>M})
$; hence,
\[
T\big(\szinv(\N_0^{>M})\big)=x+(1-x)x=2x-x^2,\quad\text{where } x= \diri(\N_0^{>M}).
\]
As $x<1$, we have that $2x-x^2\le \delta$ is equivalent to $x\le 1-\sqrt{1-\delta}$. 
By Lemma~\ref{lem:dirichlet}, if we pick 
\begin{equation}\label{eq:2D:M}
	M\ge\sqrt[t-1]{\frac{1}{1-\sqrt{1-\delta}}}+(d-1)
\end{equation}
then $\diri(\N_0^{>M})=x\le 1-\sqrt{1-\delta}$, as required.
We also note that $M$ is $O\sqrt[t-1]{(1/\delta)}$, as 
$1/\delta<1/(1-\sqrt{1-\delta})<2/\delta$. 
\end{proof}

\begin{corollary}\label{cor:2D:prax}
There is a PRAX algorithm for the emptiness (and for the universality) problem of 2D automata, relative to the 2D word distribution $\lbdu{\diri^2}$, which works in time
\[
O\big(1/\err\cdot\sqrt[t-1]{1/\err^2}\cdot s\big),
\]
where $s$ is the number of states of the 2D automaton used as input to the algorithm.
\end{corollary}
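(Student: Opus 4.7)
The plan is to apply Theorem~\ref{th:general:prax} directly. Lemma~\ref{lem:2D} establishes that $\dt=\lbdu{\diri^2}$ is a locally tractable distribution family under the size function $\sz(z)=\max(|z|_{\mathrm R},|z|_{\mathrm C})$, and 2D automaton membership lies in \NL as cited in the text, so $\lang(\sspec)\subseteq\als^{**}$ is polynomially decidable in its domain. Both hypotheses of Theorem~\ref{th:general:prax} are therefore met, and the two algorithms of Fig.~\ref{fig:prax} specialize with $T=\lbdu{\diri^2}$ to PRAX algorithms for the emptiness and universality of 2D automata.

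The second step is to instantiate the generic time bound given at the bottom of Fig.~\ref{fig:prax}, namely
\[
O\Big(M\cdot\mathrm{Cost}\big(\probd{\dt}(M)\big)+(1/\err)\cdot\big(M+\mathrm{Cost}\big(\szselect_{\dt}(M)\big)+\mathrm{Cost}\big(x\in\lang(\sspec)\big)\big)\Big).
\]
By Lemma~\ref{lem:2D}, with $\delta=\err/2$ one has $M=O\big(\sqrt[t-1]{1/\err}\big)$, $\mathrm{Cost}(\probd{\dt}(M))=O(M)$, and $\mathrm{Cost}(\szselect_{\dt}(M))=O(M^2)=O\big(\sqrt[t-1]{1/\err^2}\big)$. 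The initial summand $M\cdot O(M)=O(M^2)$ is then absorbed into the $(1/\err)$-factor block, so only the quantities inside that block govern the asymptotic running time.

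What remains is to estimate the cost of the membership test on a sampled 2D word $x$ of dimension at most $M\times M$. For a 2D automaton with $s$ states, the configuration graph has $O(s\cdot M^2)$ vertices (a state paired with a cell coordinate inside the $\#$-bordered input), each with only constantly many outgoing edges determined by the symbol under the head and the four head-move directions, so reachability of an accepting configuration is decidable in time $O(s\cdot M^2)$. This term dominates $M+\mathrm{Cost}(\szselect_{\dt}(M))$ inside the $(1/\err)$ factor, yielding the total $O\big((1/\err)\cdot s\cdot M^2\big)=O\big((1/\err)\cdot\sqrt[t-1]{1/\err^2}\cdot s\big)$ claimed by the corollary.

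The main obstacle is this last step: the cited result only places 2D membership in \NL, which a priori only guarantees some polynomial bound, so one must check that the standard transitive-closure argument on the configuration graph yields exactly the $O(s\cdot M^2)$ figure, with the alphabet size and the per-configuration fan-out treated as constants so that they do not leak into the leading factors. Everything else is straightforward substitution from Lemma~\ref{lem:2D} into Theorem~\ref{th:general:prax}.
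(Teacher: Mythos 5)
Your proof is correct and follows essentially the same route as the paper: instantiate Theorem~\ref{th:general:prax} with Lemma~\ref{lem:2D}, take $M=O(\sqrt[t-1]{1/\err})$, and observe that the sampled 2D word has dimension at most $M\times M$ so the membership test costs $O(M^2 s)$. The one concern you flag---that the text only asserts membership is in \NL---is resolved in the paper by citing Lindgren et al.~\cite{Lindgren1998Complexity2DPatterns} directly for an explicit $O(k\ell s)$ membership bound, rather than re-deriving it from the configuration graph as you do.
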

\begin{proof}
Lindgren et al.~\cite{Lindgren1998Complexity2DPatterns} established that one can test membership of a 2D word of dimension $k \times \ell$ by a 2D automaton \sspec having $s$ states in time $O(k\ell s)$.
The statement is a consequence of the PRAX algorithms in Theorem~\ref{th:general:prax} and of Lemma~\ref{lem:2D}, using the facts that (i) $M=O(\sqrt[t-1]{1/\err})$, and (ii) the 2D word $x$ that is tested for membership by \sspec in Theorem~\ref{th:general:prax} is such that $k,\ell\le M$ and, therefore, $k\ell\le M^2$.
\end{proof}


\subsection{Diophantine Equations}\label{sec:diophantine}
Several simple looking Diophantine equations are considered in \cite{Grechuk:2022}, for which it is an open question whether they have any integer solutions. 
Many of these equations have only two or three variables. In some cases the open question reduces to whether the equation and some symmetric one have \emph{positive integer solutions}. 
For example, consider Equation~(66) of \cite{Grechuk:2022}: $x^3y^2-z^3-6=0$.
It is easy to establish the following statement by inspection of cases.

\begin{remark}\label{rem:pos:solution}
	Equation $x^3y^2-z^3-6=0$ has some integer solution if and only if at least one of  $x^3y^2-z^3-6=0$ and $x^3y^2-z^3+6=0$ has a positive integer solution.
\end{remark}
\pnsi
Here we consider a PRAX algorithm for the problem of whether a 3-variable Diophantine equation $\sspec=\sspec(x,y,x)$ has nonnegative integer solutions.
The distribution implied by \sspec is $\diri^3$ such that
\[
\diri^3(j,k,\ell) = \diri(j)\cdot \diri(k)\cdot \diri(\ell).
\]
We assume that the algorithmic size $|\sspec|$ of the equation \sspec is the number of terms in the equation, and that evaluating whether a triple $(n_1,n_2,n_2)$ satisfies the equation requires linear time $O(|\sspec|)$.

\begin{lemma}\label{lem:dioph}
The distribution family $\dt=\diri^3$ is locally tractable, using the size function 
$\sz:\N_0\times\N_0\times\N_0\to\N_0$ with $\sz(j,k,\ell)=\max(j,k,\ell)$.
Specifically, $\probd{\dt}(m)$ works in time $O(m)$, $\szselect_{\dt}(m)$ works in time $O(m^2)$, and $\maxlen_{\dt}(\delta)$ returns an $M$ in $O\sqrt[t-1]{(1/\delta)}$.
\end{lemma}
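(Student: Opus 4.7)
The plan is to mirror the structure of the proof of Lemma~\ref{lem:2D}, replacing 2D rectangles by 3D boxes. First I would handle $\probd{\dt}(m)$. The ``shell'' $\szinv(m)$ consists of all triples $(j,k,\ell)\in\N_0^3$ with $\max(j,k,\ell)=m$. I would write it as the disjoint union
\[
\szinv(m) \;=\; \{m\}\times\N_0^{\le m}\times\N_0^{\le m}\;\cup\;\N_0^{<m}\times\{m\}\times\N_0^{\le m}\;\cup\;\N_0^{<m}\times\N_0^{<m}\times\{m\},
\]
which is easily checked to cover exactly those triples whose maximum coordinate equals $m$. Writing $B_m=\diri(\N_0^{\le m})$ and $B'_m=\diri(\N_0^{<m})$, independence of the three factors in $\diri^3$ gives
\[
A_m \;:=\; \dt\big(\szinv(m)\big) \;=\; \diri(m)\,B_m^{\,2} \;+\; B'_m\,\diri(m)\,B_m \;+\; (B'_m)^2\,\diri(m),
\]
and both $B_m$ and $B'_m$ (and hence $A_m$) are computable in $O(m)$ time by summing $m+1$ Dirichlet values.

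Second, for $\szselect_{\dt}(m)$ I would build the finite distribution $\dd_m$ on $\szinv(m)$ that assigns probability $\diri(j)\diri(k)\diri(\ell)/A_m$ to each $(j,k,\ell)\in\szinv(m)$. Since $|\szinv(m)|=(m+1)^3-m^3=O(m^2)$, listing these probabilities takes $O(m^2)$ time, and calling $\selectfin(\dd_m)$ on this finite distribution also takes $O(m^2)$ time; returning the selected triple yields a sample from $(\dt\mid\szinv(m))$, as required.

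Third, for $\maxlen_{\dt}(\delta)$, note that $\szinv(\N_0^{>M})$ is precisely the event that at least one of the three coordinates exceeds $M$. Writing $x=\diri(\N_0^{>M})$, independence gives
\[
\dt\big(\szinv(\N_0^{>M})\big) \;=\; 1-(1-x)^3,
\]
so the condition $\dt\big(\szinv(\N_0^{>M})\big)\le\delta$ is equivalent to $x\le 1-\sqrt[3]{1-\delta}$. A simple convexity check shows $1-\sqrt[3]{1-\delta}\ge\delta/3$ for $\delta\in(0,1)$, so by Lemma~\ref{lem:dirichlet} it suffices to take
\[
M \;\ge\; \sqrt[t-1]{\frac{1}{1-\sqrt[3]{1-\delta}}}+(d-1),
\]
which is $O\!\big(\sqrt[t-1]{1/\delta}\,\big)$ and can be computed in constant time under the unit-cost arithmetic assumption.

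The only real obstacle is bookkeeping: obtaining a clean \emph{disjoint} decomposition of the 3D shell so that the probability $A_m$ is expressible as a short closed-form sum (rather than a messy inclusion-exclusion with cancellation). Once that decomposition is in hand, the three parts parallel the 2D proof almost verbatim, with $1-(1-x)^3$ playing the role of $2x-x^2$ in the tail estimate and $O(m^2)$ replacing $O(m)$ as the cardinality of the shell.
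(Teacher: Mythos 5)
Your proof is correct and follows the same skeleton as the paper's: the same size function, the same three algorithms, and the same appeal to Lemma~\ref{lem:dirichlet}; the $O(m)$, $O(m^2)$, and $O\sqrt[t-1]{(1/\delta)}$ bounds all come out the same way. Two points of divergence are worth recording. First, for the shell you use the three-block disjoint decomposition indexed by the \emph{first} coordinate attaining the maximum, giving $A_m=\diri(m)B_m^2+B'_m\diri(m)B_m+(B'_m)^2\diri(m)$ with $B_m=\diri(\N_0^{\le m})$ and $B'_m=\diri(\N_0^{<m})$, whereas the paper splits $\szinv(m)$ into seven blocks according to \emph{which} coordinates equal $m$; both are valid disjoint covers, both reduce to $B_m^3-(B'_m)^3$, and both cost $O(m)$. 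Second, and more substantively, your tail analysis is the more careful one. You solve $1-(1-x)^3\le\delta$ exactly as $x\le 1-\sqrt[3]{1-\delta}$ and lower-bound this threshold by $\delta/3$ using the tangent line at $0$ of the convex map $\delta\mapsto 1-\sqrt[3]{1-\delta}$, which leads to the sufficient choice $M\ge\sqrt[t-1]{3/\delta}+(d-1)$. The paper instead locates the root $r$ of $f(x)=x^3-3x^2+3x-\delta$ by asserting $f(\delta/2)<0$, hence $r>\delta/2$, and takes $M\ge\sqrt[t-1]{2/\delta}+(d-1)$; but $f(\delta/2)=\tfrac{\delta}{8}(\delta^2-6\delta+4)>0$ for $\delta<3-\sqrt{5}\approx0.76$, and indeed $r=1-\sqrt[3]{1-\delta}\approx\delta/3<\delta/2$ for small $\delta$, so the paper's constant $2/\delta$ inside the radical is insufficient and should be $3/\delta$ --- exactly what your convexity bound delivers. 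The asymptotic claim $M=O\sqrt[t-1]{(1/\delta)}$, and hence the lemma, is unaffected.
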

\begin{proof}
First, we show the existence of the required algorithm $\probd{\dt}(m)$.
Given a size $m\in\N_0$, the set of triples $(j,k,\ell)$ of size $m$ is
\begin{align*}
\szinv(m) = &\{m\}\times\N_0^{<m}\times\N_0^{<m} \bigcup
            \N_0^{<m}\times\{m\}\times\N_0^{<m} \bigcup
            \N_0^{<m}\times\N_0^{<m}\times\{m\} \bigcup\\
          &\{m\}\times\{m\}\times\N_0^{<m} \bigcup
            \{m\}\times\N_0^{<m}\times\{m\} \bigcup
            \N_0^{<m}\times\{m\}\times\{m\} \bigcup \\
          &  \{(m,m,m)\}.
\end{align*}
Let $y_m=\diri(\N_0^{<m})$, which can be computed in time $O(m)$.
Then, $\dt\big(\szinv(m)\big)=3y_m^2+3y_m+\diri(m)^3$, which is the output of $\probd{\dt}(m)$. 
\pnsi
Now we show the existence of the algorithm $\szselect_{\dt}(m)$.
The required algorithm needs to select a triple from $\szinv(m)$, which is shown above, according to the finite distribution $\big(T|\szinv(m)\big)$.
The domain $\szinv(m)$ of $\big(\dt|\szinv(m)\big)$ has $3m^2+3m+1$ triples  
and the probability of each triple $(j,k,\ell)$ is 
$\big(\diri(j)\cdot \diri(k)\cdot \diri(\ell)\big)/\dt\big(\szinv(m)\big)$.
Thus, the required algorithm $\szselect_{\dt}(m)$ works as follows: (i) compute the  finite distribution $\dd_m=\big(T|\szinv(m)\big)$; (ii) invoke $\selectfin(\dd_m)$ to get a triple $(j,k,\ell)$ with $j,k,\ell\le m$---this works in time $O(m^2)$. Hence,  $\szselect_{\dt}(m)$ works in time $O(m^2)$.
\pnsi
Lastly, for  given $\delta\in(0,1)$, we need to find an $M$ such that
$\dt(\szinv(\N_0^{>M}))\le\delta$, where recall that 
$\szinv(\N_0^{>M})=\{(j,k,\ell): \max(j,k,\ell)>M\}$.
We have that
\[
\szinv(\N_0^{>M})=
\N_0^{>M}\times\N_0\times\N_0\bigcup
\N_0^{\le M}\times\N_0^{>M}\times\N_0\bigcup
\N_0^{\le M}\times\N_0^{\le M}\times\N_0^{>M}.
\]
Let $x=\diri(\N_0^{>M})$.
Then,  $\dt(\szinv(\N_0^{>M}))=x+(1-x)x+(1-x)^2x$. 
Let $f(x)=x^3-3x^2+3x-\delta$.
Then, $T(\szinv(\N_0^{>M}))-\delta=f(x)$ and we want $M$ such that $f(x)\le0$ with $x\in(0,1)$.
However we first study $f(x)$ across $[-\infty,+\infty]$.
We have $f'(x)=3(x-1)^2$, which is always $\ge0$ with $x=1$ as the only zero; $f''(x)=6(x-1)$ which is zero at $x=1$; and $f'''(x)=6>0$. Hence, $x=1$ is an inflection point of $f(x)$, and as there are no other zeros of $f'(x)$, $f(x)$ has no local minima or maxima. 
Moreover, as $f(x)$ goes to $-\infty$ when $x\to-\infty$ and to $+\infty$ as $x\to+\infty$, we have that (i) $f(x)$ is decreasing in $(-\infty,1)$; (ii) it is increasing in $(1,\infty)$;  (iii) it has exactly one root $r$, with $r\in(0,1)$ as $f(1)>0$ and $f(0)<0$; (iv) $f(x)<0$ for $x<r$ and $f(x)>0$ for $x>r$.
Moreover, one can verify that $f(\delta/2)<0$ and $f(3\delta/2)>0$; hence,
\[
\delta/2<r<3\delta/2 \quad\leftrightarrow\quad 2/\delta>1/r>2\delta/3.
\]
By Lemma~\ref{lem:dirichlet}, 
if we pick $M\ge \sqrt[t-1]{2/\delta}+(d-1)$ then 
$x=\diri(\N_0^{>M})<r$, and then $f(x)<0$, which
implies $T(\szinv(\N_0^{>M}))<\delta$, as required.
\end{proof}

\begin{corollary}\label{cor:3var:dioph}
There is a PRAX algorithm for the emptiness (and for the universality) problem of Diophantine equations \sspec with three nonnegative variables, relative to the 3D length distribution ${\diri^3}$, which works in time
\[
O\big(\sqrt[t-1]{1/\err^2} +\, 1/\err\cdot\sqrt[t-1]{1/\err} +1/\err\cdot|\sspec|\big).
\]
\end{corollary}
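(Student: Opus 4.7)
The plan is to combine Lemma~\ref{lem:dioph} with the general PRAX construction of Theorem~\ref{th:general:prax}. Lemma~\ref{lem:dioph} establishes that $\dt=\diri^3$ is locally tractable, and the standing assumption that evaluating whether a triple satisfies $\sspec$ takes time $O(|\sspec|)$ makes $\lang(\sspec)$ polynomially decidable in $\N_0\times\N_0\times\N_0$. Theorem~\ref{th:general:prax} therefore immediately supplies PRAX algorithms for both $E_{\diri^3}$ and $U_{\diri^3}$; what remains is to substitute the constituent cost bounds into the generic complexity formula stated in the caption of Fig.~\ref{fig:prax}.

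I would first set $M=\maxlen_{\dt}(\err/2)$, so that by Lemma~\ref{lem:dioph} we have $M = O(\sqrt[t-1]{1/\err})$. The one-off precomputation of the truncated distribution $D=(t_0,\ldots,t_M,r)$ invokes $\probd{\dt}(i)$ for $i=0,\ldots,M$ at cost $O(i)$ each, contributing $O(M^2) = O(\sqrt[t-1]{1/\err^2})$ in total. With sample size $n = \lceil 2c/\err\rceil = \Theta(1/\err)$, each of the $n$ loop iterations performs an $O(1)$ call to $\selectfin(D)$, one call to $\szselect_{\dt}(\ell)$ with $\ell \le M$, and one membership test of cost $O(|\sspec|)$.

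The delicate point is the per-iteration cost of $\szselect_{\dt}(m)$: using the naive $O(m^2)$ bound stated in Lemma~\ref{lem:dioph} would produce a term $(1/\err)\sqrt[t-1]{1/\err^2}$ that overshoots the claimed bound. I would instead exploit the decomposition of $\szinv(m)$ into the seven disjoint blocks displayed in the proof of Lemma~\ref{lem:dioph}: on each block, the coordinates fixed to $m$ are deterministic and the remaining coordinates are independent under the restricted distribution $(\dt\,|\,\text{block})$, each being distributed as $\diri$ conditioned on $\N_0^{<m}$. Thus one would first select a block from a $7$-point distribution whose probabilities are computable in $O(m)$ time from $y_m = \diri(\N_0^{<m})$, and then independently sample at most three free coordinates, each in $O(m)$ time. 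This yields $\szselect_{\dt}(m) = O(m)$, and recognizing and writing down this product-structure sampling is the main obstacle to matching the stated bound.

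Summing the contributions then gives
\[
O(M^2) + O\bigl((1/\err)\cdot(M + |\sspec|)\bigr) = O\Bigl(\sqrt[t-1]{1/\err^2} + (1/\err)\sqrt[t-1]{1/\err} + (1/\err)|\sspec|\Bigr),
\]
which matches the claimed complexity.
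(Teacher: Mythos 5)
Your proof is correct, but it reaches the stated bound by a genuinely different mechanism than the paper. The paper also starts from Theorem~\ref{th:general:prax} and Lemma~\ref{lem:dioph}, notes that the naive combination gives the too-large term $1/\err\cdot\sqrt[t-1]{1/\err^2}$ (because $\szselect_{\dt}(m)$ costs $O(m^2)$), and then fixes this by \emph{discarding the two-level ``select a size $m$, then a triple of size $m$'' scheme altogether}: the loop body is replaced by three independent calls to $\selectfin(D)$ on the one-dimensional truncated Dirichlet distribution (returning \false only if all three coordinates are $\neq\none$ and the triple satisfies \sspec), and the paper observes that this is distributionally equivalent to the sampling prescribed by Lemma~\ref{lem:dioph}. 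Each iteration then costs $O(M+|\sspec|)$ directly. You instead keep the generic structure of Fig.~\ref{fig:prax} intact and sharpen the subroutine: by conditioning on which of the seven blocks of $\szinv(m)$ the triple falls into, the free coordinates become independent and each samplable in $O(m)$, so $\szselect_{\dt}(m)$ drops from $O(m^2)$ to $O(m)$. Both arguments are sound and both rest on the product structure of $\diri^3$; the paper's version is shorter and arguably cleaner (it never needs the block decomposition at sampling time, and in fact would let one skip the $O(M^2)$ precomputation of $\probd{\dt}$ entirely, even though the stated bound retains that term), while yours is more modular -- it upgrades the locally-tractable interface itself, so the generic complexity formula in the caption of Fig.~\ref{fig:prax} applies verbatim with the improved cost of $\szselect_{\dt}$. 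One small point of care in your version: the equivalence you need is that block-then-coordinates sampling reproduces $(\dt\,|\,\szinv(m))$ exactly, which holds because, e.g., on the block $\{m\}\times\N_0^{<m}\times\N_0^{<m}$ the conditional law factors as the product of two copies of $\diri$ restricted to $\N_0^{<m}$; you assert this correctly, and it is the same product-structure fact the paper invokes when it declares the three independent selections ``equivalent to the selection specified in the proof of Lemma~\ref{lem:dioph}.''
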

\begin{proof}
Using the PRAX algorithms in Theorem~\ref{th:general:prax} and  Lemma~\ref{lem:dioph}, we have the time complexity
$O\big(1/\err\cdot\sqrt[t-1]{1/\err^2} +1/\err\cdot|\sspec|\big)$.
However, the body of the loop ``repeat $n$ times'' in Theorem~\ref{th:general:prax}, can be replaced with the simpler 
\pssi\quad $x := \selectfin(D)$
\pnsi\quad if $x=\none$ continue 
\pnsi\quad $y := \selectfin(D)$
\pnsi\quad if $y=\none$ continue
\pnsi\quad $z := \selectfin(D)$
\pnsi\quad if $z=\none$ continue
\pnsi\quad if $(x,y,z)$ satisfies \sspec return \false
\pssn
It is easy to see that the independent selection  of three integers in $[0,M]$, or \none, is equivalent to the selection specified in the proof of Lemma~\ref{lem:dioph}. The time complexity of the algorithm becomes as required.  
\end{proof}

%
\pnsn
\textbf{Testing emptiness of $\bm{x^3y^2-z^3-6=0}$.}
In view of Remark~\ref{rem:pos:solution}, we can use the PRAX emptiness algorithm on the equations $x^3y^2-z^3-6=0$ and $x^3y^2-z^3+6=0$ relative to the distribution ${\diri^3}$ for $d=1$. In fact, in view of the following lemma, we use $d=2$.

\begin{lemma}\label{lem:d:at:least2}
	If either of $x^3y^2-z^3-6=0$ and $x^3y^2-z^3+6=0$ has a positive integer solution $(x,y,z)$ then  $x,y,z\ge 2$.
\end{lemma}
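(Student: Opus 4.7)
The plan is to argue by contrapositive: I would show that no positive integer triple $(x,y,z)$ with at least one coordinate equal to $1$ satisfies either of the two equations. This splits into the three cases $z=1$, $y=1$, and $x=1$.

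The first two cases will be elementary. Substituting $z=1$ collapses the two equations to $x^3y^2=7$ and $x^3y^2=-5$; the second is impossible for positive integers, and primality of $7$ rules out the first (neither $y^2=7$ nor $x^3=7$ has an integer solution). Substituting $y=1$ gives $x^3-z^3=\pm 6$, and factoring as $(x-z)(x^2+xz+z^2)=\pm 6$ together with the bound $x^2+xz+z^2\ge 3$ for positive $x,z$ lets me eliminate every factor pair by a short enumeration; the only near-miss is $|x-z|=1$, which yields $3z^2+3z+1=6$, whose discriminant is not a perfect square.

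The substantive case is $x=1$, which reduces to the Mordell equations $y^2=z^3+6$ and $y^2=z^3-6$. My plan is to carry out descent in the rings $\mathbb{Z}[\sqrt{6}]$ and $\mathbb{Z}[\sqrt{-6}]$ respectively. Short congruence arguments modulo $4$ and modulo $9$ first force $\gcd(yz,6)=1$, so the conjugate factors $y\pm\sqrt{6}$ (respectively $y\pm\sqrt{-6}$) generate coprime ideals whose product is $(z)^3$, and each factor must then generate the cube of an ideal. For $y^2=z^3-6$, the ring $\mathbb{Z}[\sqrt{-6}]$ has class number $2$, so $I^3$ principal forces $I$ principal, and both units $\pm 1$ are cubes; hence $y+\sqrt{-6}=(a+b\sqrt{-6})^3$, and equating $\sqrt{-6}$-coefficients yields $3b(a^2-2b^2)=1$, an immediate contradiction since the left side is a multiple of $3$. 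The main obstacle will be $y^2=z^3+6$ in $\mathbb{Z}[\sqrt{6}]$, whose unit group is infinite, generated by the fundamental unit $\epsilon=5+2\sqrt{6}$. One reduces to the three classes of units modulo cubes, namely $\{1,\epsilon,\epsilon^2\}$ (using $-1=(-1)^3$), expands $y+\sqrt{6}=u(a+b\sqrt{6})^3$ for each $u$, and equates the coefficient of $\sqrt{6}$ to $1$; each of the resulting equations in $(a,b)$ can be eliminated by a short mod-$3$ analysis analogous to the imaginary case. If this unit-class enumeration takes too much space, an acceptable alternative is to cite the classical Mordell curve tables, which record that neither $y^2=z^3+6$ nor $y^2=z^3-6$ has an integer solution.
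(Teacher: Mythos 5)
Your case split ($z=1$, $y=1$, $x=1$) is exactly the paper's, and your handling of the two easy cases is fine --- indeed slightly more explicit than the paper's, which for $y=1$ merely asserts that $|x^3-z^3|>6$ where yours gives the factorization $(x-z)(x^2+xz+z^2)=\pm6$. The real divergence is the $x=1$ case: the paper simply cites the known non-solvability of the Mordell equations $y^2=z^3\pm6$ (via \cite{BennGhad:2015} and OEIS A054504/A081121), whereas you propose carrying out the descent yourself. Your imaginary-quadratic argument for $y^2=z^3-6$ is sound ($3b(a^2-2b^2)=1$ is indeed impossible), but be warned that the real case $y^2=z^3+6$ in $\mathbb{Z}[\sqrt{6}]$ is harder than you suggest: for the unit classes $u=\epsilon$ and $u=\epsilon^2$ the coefficient of $\sqrt{6}$ in $u(a+b\sqrt{6})^3$ is \emph{not} divisible by $3$ (e.g.\ for $u=\epsilon$ one gets $2a^3+15a^2b+36ab^2+30b^3=1$, which only forces $a\equiv2\pmod 3$), so ``a short mod-$3$ analysis analogous to the imaginary case'' will not close those subcases; one needs further congruences (or a different elementary route, such as writing $y^2+2=(z+2)(z^2-2z+4)$ and finding a prime factor $p\equiv5,7\pmod8$ of the second factor). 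Since you explicitly offer the citation of the classical Mordell tables as a fallback, your proof stands, and with that fallback it is essentially the paper's proof; if you want the self-contained descent, the real-quadratic subcase needs to be completed.
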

\begin{proof}
    Suppose that $(x,y,z)$ is a positive integer solution: $x^3y^2=z^3+6$ or $x^3y^2=z^3-6$.
	If $z=1$, then $x^3y^2=7$ or $x^3y^2=-5$, which is not possible for any $x,y\ge1$. Hence, $z\ge2$. If $x=1$, then both $y^2=z^3+6$ and $y^2=z^3-6$ are Mordell Diophantine equations: equations of the form $y^2=z^3+ k$, with $k$ being an integer~\cite{BennGhad:2015}. 
	But neither of these two equations has a solution for $k=\pm6$; see sequences A054504 and A081121 in the OEIS~\cite{OEIS}. Hence, $x\ge2$. Finally, if $y=1$, then $|x^3-z^3|=6$. But one can verify that the difference  $|x^3-z^3|$ is always greater than $6$. Hence, $y\ge2$.
\end{proof}

\begin{table}[t]
\caption{
Testing \err-emptiness of solutions to $x^3y^2-z^3-6=0$ and $x^3y^2-z^3+6=0$ relative to three Dirichlet distributions $\diri^3$: (i) with parameters $\err=0.00001,t=2.0,d=2$, hence $M=400,001$; (ii) with parameters $\err=0.001,t=1.5,d=2$, hence $M=16,000,001$; (iii) with parameters $\err=0.05,t=1.25,d=2$, hence $M=40,960,001$. 
For each case, the algorithm was run five times and returned \true in all cases (no solutions  found). 
As $t$ gets closer to 1, the maximum integer $M$ that can be sampled gets larger.
Machine used: MacBook Pro, M2 Max, 64 GB memory. Programming: Python 3, using PRAX implementation in \cite{Fado}.
}
\label{table:test:dioph}
\centering
  \begin{footnotesize}
\begin{tabular}{rc} \hline
$t=2.0,\err=10^{-5}$  & Answer
   \\\hline
107.159 sec 	&  \true
\\
108.263 sec 	&  \true
\\
104.091 sec 	&  \true
\\
115.994 sec 	&  \true
\\
108.554 sec 	&  \true
\\\hline
\end{tabular}
\quad
\begin{tabular}{rc} \hline
$t=1.5,\err=0.001$  & Answer
   \\\hline
10,260.828 sec 	&  \true
\\
9,060.668 sec 	&  \true
\\
8,396.982 sec 	&  \true
\\
10,481.317 sec 	&  \true
\\
7,289.750 sec 	&  \true
\\\hline
\end{tabular}
\quad
\begin{tabular}{rc} \hline
$t=1.25,\err=0.05$  & Answer
   \\\hline
49,111.178 sec 	&  \true
\\
28,295.932 sec 	&  \true
\\
24,449.394 sec 	&  \true
\\
17,735.975 sec 	&  \true
\\
20,290.405 sec 	&  \true
\\\hline
\end{tabular}
\end{footnotesize}
\end{table}

\begin{remark}\label{rem:equat:test}\textsf{[Experimental ``Theorem'']}
	The three PRAX experiments in Table~\ref{table:test:dioph} have found no solutions to the equations $x^3y^2-z^3\pm6=0$. By Corollary~\ref{cor:3var:dioph}, this implies that: 
	\begin{enumerate}
		\item with probability $\ge1023/1024$, the set $S$ of solutions (of either equation) is $0.00001$-close to being empty, relative to ${\diri^3}$ with $t=2.00$ and $d=2$, i.e. ${\diri^3}(S)\le 0.00001$.
		\item with probability $\ge1023/1024$, the set $S$ of solutions (of either equation) is $0.001$-close to being empty, relative to ${\diri^3}$ with $t=1.50$ and $d=2$, i.e. ${\diri^3}(S)\le 0.001$.
		\item with probability $\ge1023/1024$, the set $S$ of solutions (of either equation) is $0.05$-close to being empty, relative to ${\diri^3}$ with $t=1.25$ and $d=2$, i.e. ${\diri^3}(S)\le 0.05$.
	\end{enumerate}
\end{remark}

\section{Concluding Remarks}\label{sec:last}
In this paper, we have extended the PRAX algorithms for NFA universality and NFA equivalence in \cite{KoMaMoRo:2023,KoMoReSe:2024} to work for any universality and emptiness problems whose instances have tractable or samplable domains. 
Although widely applicable, these algorithms are only meaningful where approximation is meaningful. As stated in \cite[pg.~417]{Gold:2008}:
\begin{quotation}
The answer to [what constitutes a ``good'' approximation] seems intimately related to the specific computational task at hand \dots the importance of certain approximation problems is much more subjective \dots [which] seems to stand in the way of attempts at providing a \emph{comprehensive} theory of \emph{natural} approximation problems.
\end{quotation}
In \cite{KoMaMoRo:2023}, NFA universality applies to code maximality where it is acceptable to have a code that is $\err$-close maximal as opposed to strictly maximal.

A PRAX algorithm $A(\sspec,\err)$ fails when $\lang(\sspec)$ is not close to being universal, or empty, relative to some distribution \dt, but the algorithm returns \true.
One of the observations in \cite{KoMaMoRo:2023} was that no NFA universality examples were found for which the algorithm fails; that is, the algorithm would always find a witness of non-universality when the given NFA is not \err-close to being universal.
This led to improving in this paper the sample size bound from quadratic to linear w.r.t. $1/\err$. However, we still have not been able to find examples of instances where the algorithm fails. It might be difficult to further improve the linear sample size bound asymptotically, but perhaps there is a way to reduce the constant of that bound.

In the spirit of experimental mathematics, a large experiment could be set up, possibly involving parallel sampling, to obtain stronger empirical results of the sort in Remark~\ref{rem:equat:test} for values of $t$ closer to 1.


\bibliographystyle{plain}
\bibliography{refs.bib}

\end{document}